\newcommand\semihuge{\@setfontsize\semihuge{22.3}{22}}
\newtheorem{theorem}{\bf Theorem}
\newtheorem{proposition}{\bf Proposition}
\newtheorem{lemma}{\bf Lemma}
\begin{document}
	
\title{\semihuge  Wireless Communication using Unmanned Aerial Vehicles (UAVs): Optimal Transport Theory for Hover Time Optimization \vspace{-0.3cm}}    

\author{\IEEEauthorblockN{  Mohammad Mozaffari$^1$, Walid Saad$^1$, Mehdi Bennis$^2$, and M\'erouane Debbah$^3$}\vspace{-0.05cm}\\
	\IEEEauthorblockA{
		\small $^1$ Wireless@VT, Electrical and Computer Engineering Department, Virginia Tech, VA, USA,\\ Emails:\url{{mmozaff , walids}@vt.edu}.\vspace{-0.07cm}\\
		$^2$ CWC - Centre for Wireless Communications, Oulu, Finland, Email: \url{bennis@ee.oulu.fi}.\vspace{-0.07cm}\\
		$^3$ Mathematical and Algorithmic Sciences Lab, Huawei France R \& D, Paris, France, and CentraleSup´elec,\vspace{-0.07cm}\\   Universit´e Paris-Saclay, Gif-sur-Yvette, France, Email: \url{merouane.debbah@huawei.com}.
	}\vspace{-0.92cm}}
\maketitle\vspace{-0.8cm}
\vspace{-0.1cm}
\begin{abstract}\vspace{-0.4cm}	
\textcolor{black}{In this paper, the effective use of flight-time constrained unmanned aerial vehicles (UAVs) as flying base stations that can provide wireless service to ground users is investigated. In particular, a novel framework for optimizing the performance of such UAV-based wireless systems in terms of the average number of bits (data service) transmitted to users as well as UAVs' hover duration (i.e. flight time) is proposed.} In the considered model, UAVs hover over a given geographical area to serve ground users that are distributed within the area based on an arbitrary spatial distribution function. In this case, two practical scenarios are considered. In the first scenario, based on the maximum possible hover times of UAVs, the average data service delivered to the users under a fair resource allocation scheme is maximized by finding the optimal cell partitions associated to the UAVs. \textcolor{black}{Using the powerful mathematical framework of optimal transport theory, this cell partitioning problem is proved to be equivalent to a convex optimization problem. Subsequently, a gradient-based algorithm is proposed} for optimally partitioning the geographical area based on the users' distribution, hover times, and locations of the UAVs.  In the second scenario, given the load requirements of ground users, the minimum average hover time that the UAVs need for completely servicing their ground users is derived. To this end, first, an optimal bandwidth allocation scheme for serving the users is proposed. Then, given this optimal bandwidth allocation, the optimal cell partitions associated with the UAVs are derived by exploiting the optimal transport theory. Simulation results show that our proposed cell partitioning approach leads to a significantly higher fairness among the users compared to the classical weighted Voronoi diagram
. Furthermore, the results demonstrate that the average hover time of the UAVs can be reduced by 64\% by adopting the proposed optimal bandwidth allocation as well as the optimal cell partitioning approach. In addition, our results reveal an inherent tradeoff between the hover time of UAVs and bandwidth efficiency while serving the ground users.

\end{abstract} \vspace{0.1cm}

\section{Introduction}

Recently, the use of aerial platforms such as unmanned aerial vehicles (UAVs), drones, balloons, and helikite has emerged as a promising solution for providing reliable and cost-effective wireless communication services for ground wireless devices \cite{orfanus,mozaffari2, Ismail, Zhang}. In particular, UAVs can be deployed as flying base stations for coverage expansion and capacity enhancement of terrestrial cellular networks \cite{mozaffari2,Ismail, IoTJournal,HouraniModeling, Azari, Kalantari, Zhang,  bor,Letter}. With their inherent attributes such as mobility, flexibility, and adaptive altitude, UAVs have several key potential applications in wireless systems. 
 For instance, UAVs can be deployed to complement existing cellular systems by providing additional capacity to hotspot areas during temporary events. Moreover, UAVs can also be used to provide network coverage in emergency and public safety situations during which the existing terrestrial network is damaged or not fully operational. One key advantage of UAV-based wireless communication is its unique ability to provide fast, reliable and cost-effective connectivity to areas which are poorly covered by terrestrial networks. In addition, compared to ground base stations, UAVs can more effectively establish line-of-sight (LoS) communication links to ground users by intelligently adjusting their altitude. Key examples of recent projects on employing aerial platforms for wireless connectivity include Google Loon project and Facebook's Internet-delivery drone \cite{Sky}. Within the scope of these practical deployments, UAVs are being used to deliver Internet access to emerging countries and provide airborne global Internet connectivity. Despite the several benefits and practical applications of using UAVs as aerial base stations, one must address many technical challenges such as performance analysis, deployment, air-to-ground channel modeling, user association, and flight time optimization \cite{Zhang} and \cite{bor}.\vspace{-0.3cm}
 
\subsection{Related Works}

In \cite{HouraniModeling}, the authors performed air-to-ground channel modeling for UAV-based communications in various propagation environments. 
 In \cite{Azari} and \cite{Kalantari}, the authors studied the efficient deployment of aerial base stations to maximize the coverage and rate performance of wireless networks.  In \cite{Jeong}, the authors investigated the  energy-efficient path planning of a UAV-mounted cloudlet which is used to provide offloading opportunities to ground devices. The work in \cite{jiang} studied the optimal trajectory and heading of UAVs for sum-rate maximization in uplink communications. An analytical framework for trajectory optimization of a fixed-wing UAV for energy-efficient communications was presented in \cite{ZhangEnergy}. The work in \cite{Qin} jointly optimized user scheduling and UAV trajectory
for maximizing the minimum average rate among ground users. The authors in \cite{Vishnu} derived an exact expression for downlink coverage probability for ground receivers which are served by multiple UAVs. 

Another important challenge in UAV-based communications is user (or cell) association. The work in \cite{Vishal} investigated the area-to-UAV assignment for capacity enhancement of heterogeneous wireless networks. However, this work is limited to the case with a uniform spatial distribution of ground users. Moreover, the work in \cite{Vishal} does not consider any fairness criteria that can be affected by the network congestion in a non-uniform users' distribution case. In addition, this work ignores the UAVs' flight time constraints while determining the cell partitions associated with the UAVs. In \cite{OTUAV} the optimal cell partitions associated to UAVs were determined with the goal of minimizing the UAVs' transmit power while satisfying the users' rate requirements. However, in \cite{OTUAV}, the impact of flight time constraints on the performance of UAV-to-ground communications is not taken into account.    

Indeed, the \emph{flight time duration of the UAVs} presents a unique design challenge for UAV-based communication systems \cite{niu} and \cite{DroneDel}.
For instance, the performance of such systems significantly depends on the \emph{hover time} of each UAV, which is defined as the flight time during which the UAV must stay in the air over a given area for providing wireless service to ground users. In fact, with a higher hover time of the UAV, the users can receive wireless service for a longer period. Thus, by increasing the hover time, a UAV can meet higher load requirements and serve a larger area. However, the hover time of a UAV is naturally limited due to the highly constrained battery-provided, on-board energy, as well as flight regulations such as no-fly time/zone constraints \cite{AkramMagazin}. Hence, while analyzing the UAV-based communication systems, the hover time constraints must be also taken into account. In this case, there is a need for a framework to analyze and optimize the performance of UAV-based communications based on the hover time of UAVs. In fact, to our best knowledge, none of the previous UAV studies such as \cite{IoTJournal, Letter, Ismail,Zhang, mozaffari2, ZhangEnergy, HouraniModeling, bor, Kalantari, Vishal, Vishnu,Jeong,jiang,Sky,orfanus, Azari,Qin}, considered the hover time constraints in their analysis. \vspace{-0.3cm} 

\subsection{Contributions}
The main contribution of this paper is to develop a novel framework for  optimized UAV-to-ground communications under explicit UAVs' hover time constraints. In particular, we consider a network in which multiple UAVs are deployed as aerial base stations to provide wireless service to ground users that are distributed over a geographical area based on an arbitrary spatial distribution. We investigate two key practical scenarios: \emph{UAV communication under hover time constraints}, and \emph{UAV communication under load constraints}. In the first scenario, given the maximum possible hover time of UAVs that is imposed by the limited on-board energy of UAVs and flight regulations, we maximize the average number of bits (data service) that is transmitted to the users under a fair resource allocation scheme. To this end, given the hover times and the spatial distribution of users, we find the optimal cell partitions associated to the UAVs. In this case, using the powerful mathematical framework of \emph{optimal transport theory} \cite{villani}, we propose a gradient-based algorithm that optimally partitions the geographical area based on the users' distribution as well as the UAVs' hover times and locations. In the second scenario, given the load requirements of ground users, we minimize the average hover time needed for completely serving the users. To this end, we introduce an optimal bandwidth allocation scheme as well as optimal cell partitions for which the average hover time of UAVs is minimized. Our results for the first scenario show 
that our proposed cell partitioning approach leads to a significantly higher fairness among the users compared to the classical weighted Voronoi approach
. For the second scenario, the results show that the average hover time can be reduced by 64\% by adopting the proposed optimal bandwidth allocation and cell partitioning approach. Furthermore, our results reveal an inherent tradeoff between the hover time of UAVs and bandwidth efficiency while servicing the ground users. \vspace{-0.01cm} 

The rest of this paper is organized as follows. In Section II, we present the system model. In Section III we investigate Scenario 1 in order to maximize data service. Section IV presents Scenario 2 for minimizing the hover time of UAVs. Simulation results are presented in Section V and conclusions are drawn in Section VI.

\section{System Model}

Consider a geographical area $\mathcal{D}\subset \mathds{R}^2$ within which a number of wireless \textcolor{black}{users are located according to} a given distribution $f(x,y)$ in the two-dimensional plane. In this area, a set  $\mathcal{M}$ of $M$ UAVs are used as aerial base stations to provide wireless service for the ground users\footnote{For wireless backhauling of aerial networks, satellite and WiFi are considered as the two feasible candidates \cite{AkramMagazin}.}.  Let $\boldsymbol{s}_i=(x_{i},y_{i},h_i)$ be the three-dimensional (3D) coordinate of each UAV $i\in \mathcal{M}$ with $h_i$ being the altitude of UAV $i$. 
We consider a downlink
scenario in which each UAV adopts a frequency division multiple access (FDMA) technique to provide service for the ground users as done in \cite{mozaffari2} and \cite{FDMA2}. 
Let $P_i$ and $B_i$ be, respectively, the maximum transmit power and the total available bandwidth for UAV $i$. 
 Moreover, we use $\mathcal{A}_i$, as shown in Fig.\,\ref{SystemModel},  to denote the partition of the geographical area which is served by UAV $i$. In this case, all users located in cell partition $\mathcal{A}_i$ will be connected to UAV $i$. Hence, the geographical area is divided into $M$ disjoint partitions each of which is associated with one of the UAVs. Let $\tau_i$ be the \emph{hover time} of UAV $i$, defined as the time duration that a UAV uses to hover (stop) over the corresponding cell partition to service the ground users. During the hover time, the UAV must  initiate connections to the ground users, perform required computations, and transmit data to the users. Let $T_i$ be the effective data transmission period during which a UAV services the users. In general, the effective data  transmission time is less than the total hover time. Consequently, we consider a \emph{control time} as ${g_i(.)}$,  a function of the number of users in $\mathcal{A}_i$, to represent the portion of the hover time that is not used for the effective data transmission. This control time naturally captures the total time that a UAV $i$ needs to spend for computations, setting up connections, and control signaling. Intuitively, the control time will increase when the number of users in the corresponding cell partition increases.  
 
  In our model, we use the term \textit{data service} to represent the amount of data (in bits) that each UAV transmits to a given user. Clearly, the data service depends on several factors such as the effective data transmission time (which is directly related to flight time), and the transmission bandwidth. 
  Therefore, here, the effective data transmission times and bandwidth of the UAVs are considered as \textit{resources} which are used for servicing the users.  
    \begin{figure}[!t]
    	\begin{center}
    		\vspace{-0.1cm}
    		\includegraphics[width=8cm]{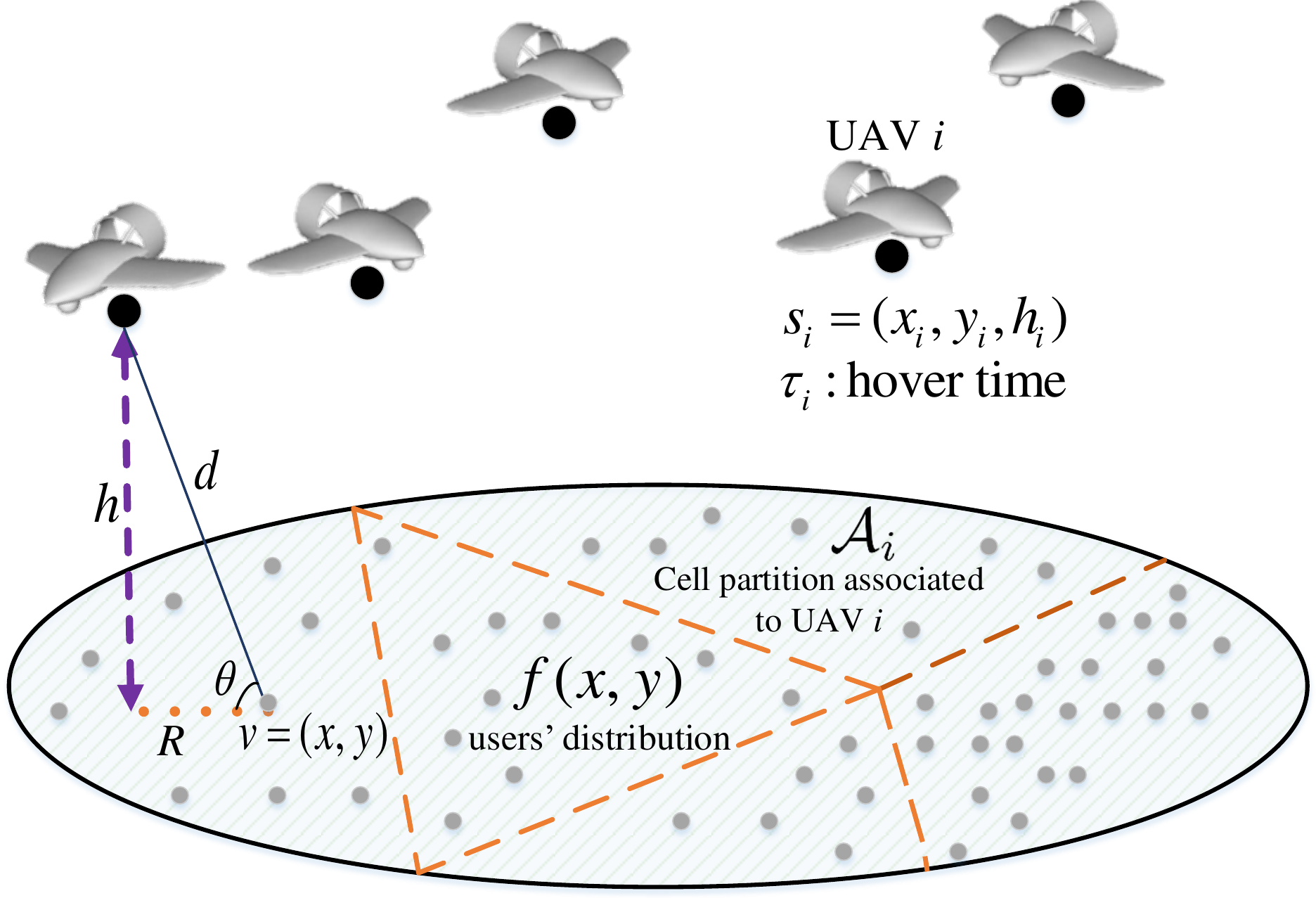}
    		\vspace{-0.4cm}
    		\caption{  System model. \vspace{-.82cm}}
    		\label{SystemModel}
    	\end{center}	\vspace{-0.4cm}
    \end{figure} 
 Given this model, to provide service for the ground users using UAVs, we consider two scenarios. The first scenario, \textbf{Scenario~1}, can be referred to as  \emph{UAV communications under hover time constraints}. 
   In this case, given the maximum possible hover times (imposed by the energy and flight limitations of each UAV), we maximize the average data service to the users under a  fair resource allocation policy by optimal cell partitioning of the area. Here, we optimally partition the geographical area based on the hover times and the spatial distribution of users. In Scenario 1, given the maximum possible hover time of each UAV, the total data service under user fairness considerations is maximized. In fact, Scenario 1 corresponds to resource-limited communication scenarios in which the amount of resources (e.g. hover times and bandwidth) is not sufficient to completely meet the demands. An example of such scenario is when battery-limited UAVs are deployed in hotspots with high number of users and demands. The second scenario, \textbf{Scenario 2}, is referred to as  \emph{UAV communication under load constraints}. In this case, our goal is to completely meet the demands of ground users by properly adjusting the hover time of the UAVs. In particular, given the load requirement of each user at a given location, we minimize the average hover time needed for completely serving the ground users. As a result, the load requirement of the ground users will be satisfied with a minimum average hover time of the UAVs. In this case, by minimizing the hover time, one can minimize the energy consumption of the UAVs as well as the time needed to completely serve the ground users. Such analyses in Scenario 2 are primarily useful is emergency situations in which all users need to be quickly served by the UAVs. 
    \vspace{-0.4cm}

\subsection{Air-to-ground path loss model}
The air-to-ground signal propagation is affected by
the obstacles and buildings in the environment. In this case, depending on the propagation environment,
air-to-ground communication links can be either LoS or non-line-of-sight (NLoS). In general, while designing a UAV-based communication system, a complete information about the exact locations, heights, and
the number of obstacles may not be available \cite{HouraniModeling}. In such case, one must consider the randomness associated with the LoS and NLoS links \cite{Kalantari}, \cite{HouraniModeling}. Clearly, the probability of having LoS communication links depends on the locations, heights, and the number of obstacles, as well as the elevation angle between a given UAV and it's served ground user. In our model, we consider a widely used probabilistic path loss model provided by International Telecommunication Union (ITU-R) \cite{ITUR}, and the work in \cite{HouraniModeling}. In this case, the path loss between UAV $i$ and a given user at location $(x,y)$ can be given by  \cite{Kalantari}, and \cite{HouraniModeling}:\vspace{-0.3cm}

\begin{equation}\label{Pr}
	{\Lambda_{i}(x,y)} = \left\{\hspace{-0.16cm} \begin{array}{l}
		\left(\frac{{4\pi {f_c}d_o}}{c}\right)^{2}\big(d_i(x,y)/d_o\big)^{2} {\mu _\text{LoS},} \,\,\,\hspace{0.35cm}{\text{LoS link,}}\\
		 \left(\frac{{4\pi {f_c}d_o}}{c}\right)^{2}\big(d_i(x,y)/d_o\big)^{2} {\mu _\text{NLoS},} \,\,\,\hspace{0.21cm}{\text{NLoS link,}}
	\end{array} \right.
\end{equation}
where $\mu_\textrm{LoS}$ and $\mu_ \textrm{NLoS}$ are different attenuation factors considered for LoS and NLoS links. 
Here, $f_c$ is the carrier frequency, $c$ is the speed of light, and $d_o$ is the free-space reference distance.  \begin{small}$d_i(x,y)=\sqrt{(x-x _{i})^2+(y-y_{i})^2+h_i^2}$\end{small} is the distance between UAV $i$ and an arbitrary ground user located at $(x,y)$. For the UAV-user link, the LoS probability is given by \cite{HouraniModeling}:\vspace{-0.1cm} 
\begin{equation} \label{PLoS}
	{P_{\text{LoS},i}} = b_1 {\left( {\frac{180}{\pi}\theta_i  - 15} \right)^{b_2} }, 
\end{equation}
 where $\theta_i={\sin ^{- 1}}( \frac {h_i} {d_i(x,y)})$ is the elevation angle (in radians) between the UAV and the user. Also, $b_1$ and $b_2$ are constant values reflecting the environment impact. Note that, the NLoS probability is $P_{\text{NLoS},i}=1-P_{\text{LoS},i}$. Clearly, considering $d_o=1$\,m, and  \begin{small}$K_o=\left(\frac{4\pi f_c} {c}\right)^{2}$\end{small} the average path loss is\\ ${ {{K_o d_i}}^{ 2}(x,y)}[ {{P_{\textrm{LoS},i}}{\mu _\textrm{LoS}} +  {P_{\textrm{NLoS},i}}{\mu _\textrm{NLoS}}} ]$. Hence, the received signal power from UAV $i$ will be:
\begin{equation}\label{Puav_ave}
 \bar P_{r,i}(x,y) =\frac{P_i}{{ K_o{{d_i}}^{ 2}(x,y)}\left[ {{P_{\textrm{LoS},i}}{\mu _\textrm{LoS}} + {P_{\textrm{NLoS},i}}{\mu _\textrm{NLoS}}} \right]},
\end{equation}
where $P_i$ is the UAV's transmit power. Then, the received SINR for a user located at $(x,y)$ while connecting to UAV $i$ will be:
\begin{equation} 
{\gamma _i}(x,y) = \frac{{{{\bar P}_{r,i}}(x,y)}}{{{I_i}(x,y) + {\sigma ^2}}}, \label{gamma}
\end{equation}
where ${I_i}(x,y) = \beta \sum\limits_{j \ne i} {{{\bar P}_{r,j}}(x,y)}$ is the received interference at location $(x,y)$ stemming from all UAVs except UAV $i$. We also consider a weight factor $0\le \beta \le 1$ to adjust the amount of interference and capture the impact of any interference mitigation technique. Naturally, $\beta=1$ and  $\beta=0$, respectively, correspond to the full interference and interference-free scenarios.
 
Clearly, the throughput of a user located at $(x,y)$ if it connects to UAV $i$ is:\vspace{-0.2cm} 
\begin{align} \label{Cap}
&C_i (x,y)= {{B(x,y)} \,{{\log }_2}\left( {1 + \gamma_i(x,y)} \right)}, 
\end{align}
where $B(x,y)$ is the bandwidth allocated to the user at $(x,y)$. 

Subsequently, the total data service for the user provided by the UAV will be:\vspace{-0.2cm} 
\begin{equation} \label{Loi}
{L_i}(x,y) = T_i{C_i}(x,y),
\end{equation}
where $T_i$ is the effective transmission time of UAV $i$. Also, ${L_i}(x,y)$ represents the total number of bits transmitted to the user located at $(x,y)$. Note that, the data service offered to each ground user depends on a number of key parameters such as the location of the user and the serving UAV, the bandwidth allocated to the user, and the effective data transmission time of the UAV, $T_i$. Here, we consider the available bandwidth and effective data transmission times as the \textit{resources} used by the UAVs to serviced the ground users. Clearly, the amount of resources that each user can receive depends on several parameters such as the total number of users, cell partitions as well as bandwidth and hover times of the UAVs. Given this model, we next analyze Scenario 1. 

 \section{Scenario 1: Optimal Cell Partitioning for Data Service Maximization with fair resource allocation} \label{Sec}
 
  In this section, \textcolor{black}{our goal is to find the optimal cell partitions that maximize the average data service to the ground users based on the UAVs' hover times and the spatial distribution of the ground users}.
 In this case, each cell partition is assigned to one UAV, and the users within the cell partition must be serviced by the corresponding UAV. We note that, in classical cell partitioning approaches such as Voronoi and weighted Voronoi diagrams \cite{VoronoiDiag}, the spatial distribution of users is not taken into account. As a result, some partitions can be highly congested with users and, hence, each user will receive significantly lower amount of resources than those in less congested partitions. Thus, such classical cell partitioning approaches can lead to a highly unfair data service for the users. In our cell partitioning approach, however, while maximizing the total data service, we ensure that the resources are equally shared among all the users. Hence, our approach avoids creating unbalanced cell partitions
 and, thus, it leads to a higher level of fairness compared to the classical Voronoi approach. In the following, we present the details of our cell partitioning approach based on the UAVs' hover times and the spatial distribution of users.  
 
 Let $\tau_i$ be the hover time of UAV $i$ during which it provides service for the users located in the corresponding cell partition, $\mathcal{A}_i$. The hover time is composed of the effective data transmission time and the control time. 
 To ensure a fair resource allocation policy, we consider the following fairness criterion:\vspace{-0.2cm} 
  \begin{equation} \label{Ti}
  {T _i} = \tau_i - {g_i \left(\int_{{\mathcal{A}_i}} {f(x,y)}\textrm{d}x\textrm{d}y \right)}, \,\,\, \forall i\in \mathcal{M},
 \end{equation}
where $g_i$ is the control time which depends on the number of the users located in $\mathcal{A}_i$. Note that, the average number of users within each cell partition $\mathcal{A}_i$ is linearly proportional to $\int_{{\mathcal{A}_i}} {f(x,y)}\textrm{d}x\textrm{d}y$. In other words, given the spatial distribution of users, $f(x,y)$, and the total number of users, $N$, the average number of users in partition $\mathcal{A}_i$ is equal to $N\int_{{\mathcal{A}_i}} {f(x,y)}\textrm{d}x\textrm{d}y$  \cite{Alonso}.  
From (\ref{Cap}) and (\ref{Loi}) which are used to compute the amount of data service, we can see that the value $T_iB_i$ can be considered as the resources that UAV $i$ uses to service users in $\mathcal{A}_i$. In this case, under a fair resource allocation policy, we should have:
\begin{equation}
\frac{{{T_i}B_i}}{{\int_{{\mathcal{A}_i}} {f(x,y)\textrm{d}x\textrm{d}y} }} = \frac{{{T_j}B_j}}{{\int_{{\mathcal{A}_j}} {f(x,y)\textrm{d}x\textrm{d}y} }},\,\,\, \forall i\ne j \in \mathcal{M}, \label{resource} \footnote{Note that, given hover times of the UAVs, $\tau_i$, $\forall i\in \mathcal{M}$, we can compute $T_i$, $\forall i\in \mathcal{M}$ by solving the system of equations in (\ref{Ti}) and (\ref{resource}). }
\end{equation}
where (\ref{resource}) ensures that a UAV with more resources (bandwidth and hover time) will serve a higher number of users. 


Now, using (\ref{resource}) and considering the fact that $\int_{\mathcal{D}} {f(x,y)}\textrm{d}x\textrm{d}y = \sum\limits_{k = 1}^M {\int_{{\mathcal{A}_k}} {f(x,y)}\textrm{d}x\textrm{d}y }  = 1$, we have the following constraint on the number of users in each partition:
\begin{equation} \label{capacity}
\int_{{\mathcal{A}_i}} {f(x,y)\textrm{d}x\textrm{d}y}  = \frac{{{B_iT_i}}}{{\sum\limits_{k = 1}^M {{B_k T _k}} }},\,\, \forall i\in \mathcal{M}. 
\end{equation} 
As we can see from (\ref{capacity}), the number of users in each generated optimal partition will depend on the UAVs' resources. Clearly,  when the UAVs have the same hover times and bandwidths, (\ref{Ti})-(\ref{capacity}) lead to $\int_{{\mathcal{A}_i}} {f(x,y)\textrm{d}x\textrm{d}y}=\frac{1}{M}$, $\forall i \in \mathcal{M}$. This case implies that the identical UAVs will service equally-loaded cell partitions. 

Given (\ref{Cap}), (\ref{Loi}), and (\ref{capacity}), we can write the average data service at location $(x,y)\in \mathcal{A}_i$ as:
\begin{equation}
{L_i}(x,y) = \frac{{{T_i}B_i}}{{N\int_{{\mathcal{A}_i}} {f(x,y)\textrm{d}x\textrm{d}y} }}{\log _2}\left( {1 + {\gamma _i}(x,y)} \right) = \left( {\frac{1}{N}\sum\limits_{k = 1}^M {{B_k T_k}} } \right){\log _2}\left( {1 + {\gamma _i}(x,y)} \right).
\end{equation}


Now, we formulate an optimization problem for maximizing the average data service by optimal partitioning of the target area. The data service maximization problem is given by:
 \begin{align} \label{Load}
 &\mathop {\max }\limits_{{\mathcal{A}_i},\,  i \in \mathcal{M}} \sum\limits_{i = 1}^M {\int_{{\mathcal{A}_i}} {\left( {\frac{1}{N}\sum\limits_{k = 1}^M {{B_k T_k}} } \right){\log _2}\left( {1 + {\gamma _i}(x,y)} \right)f(x,y)} } \textrm{d}x\textrm{d}y, \\
 \textrm{s.t.}\,\, &\int_{{\mathcal{A}_i}} {f(x,y)} \textrm{d}x\textrm{d}y = \frac{{{B_i T_i}}}{{\sum\limits_{k = 1}^M {{B_k T_k}} }},\,\,\,\forall  i \in \mathcal{M}, \label{f}\\
  & {\gamma _i}(x,y) \ge {\gamma _{th}},\,\,\, {\textrm{if} \,\, (x,y)\in \mathcal{A}_i}, \,\,\, \forall i\in \mathcal{M}, \label{Rcons} \\
  &{\mathcal{A}_l} \cap {\mathcal{A}_m} = \emptyset ,\,\,\,\forall l \ne m \in \mathcal{M},\label{area1}\\
 &\bigcup\limits_{i \in \mathcal{M}} {{\mathcal{A}_i}}  = \mathcal{D},\label{area2}
 \end{align} 
 where (\ref{f}) captures the constraint on the load of each cell partition. Also, (\ref{Rcons}) is the necessary condition for connecting each user to a UAV $i$. (\ref{area1}) and (\ref{area2}) ensure that the cell partitions are disjoint and their union covers the entire target area $\mathcal{D}$.    
 
 
\textcolor{black}{Considering the constraint in (\ref{Rcons}), we define  the function ${q_i}(x,y)= {\left( {\frac{{\gamma _i}(x,y)}{{{\gamma _\textrm{th}}}}} \right)^n }$ with $n$ being a large number (i.e. tends to $+\infty$), and, then, we substract ${q_i}(x,y)$ from the objective function in (\ref{Load}). Clearly, when (\ref{Rcons}) is violated, ${q_i}(x,y)$ tends to $+\infty$ and, hence, point $(x,y)$ will not be assigned to UAV $i$ or equivalently $(x,y) \notin \mathcal{A}_i$.    Therefore, whenever the problem is feasible,} we can remove  (\ref{Rcons}) while penalizing  the objective function in (\ref{Load}) by $q_i(x,y)$. Now, by defining $\lambda=  {\frac{1}{N}\sum\limits_{k = 1}^M {{B_k T_k}} } $, and ${\omega _i} =  \frac{{{B_i T_i}}}{{\sum\limits_{k = 1}^M {{B_k T_k}} }}$, the maximization problem in (\ref{Load}) can be rewritten as the following minimization problem: 
\begin{align} \label{Load2}
&\mathop {\min }\limits_{{\mathcal{A}_i},\,  i \in \mathcal{M}} \sum\limits_{i = 1}^M {\int_{{\mathcal{A}_i}}-{\left(\lambda {\log _2}\left( {1 + {\gamma _i}(x,y)} \right)-q_i(x,y)\right) f(x,y)}} \textrm{d}x\textrm{d}y, \\
\textrm{s.t.}\,\, &\int_{{\mathcal{A}_i}} {f(x,y)} \textrm{d}x\textrm{d}y =\omega_i,\,\,\,\forall  i \in \mathcal{M}, \label{Constr} \\
&{\mathcal{A}_l} \cap {\mathcal{A}_m} = \emptyset ,\,\,\,\forall l \ne m \in \mathcal{M},\\
&\bigcup\limits_{i \in \mathcal{M}} {{\mathcal{A}_i}}  = \mathcal{D}.
\end{align} 

Solving the optimization problem in (\ref{Load2}) is challenging due to various reasons. First, the optimization variables $\mathcal{A}_i$, $ \forall i \in \mathcal{M}$, are sets of continuous partitions (as we have a continuous area) which are mutually dependent. Second, to perfectly capture the spatial distribution of users, $f(x,y)$ is considered to be a generic function of $x$ and $y$ and, this leads to the complexity of the given two-fold integrations. In addition, due to the constraints given in (\ref{Constr}), finding $\mathcal{A}_i$ becomes more challenging. To solve the optimization problem in (\ref{Load2}), next, we model the problem by exploiting \emph{optimal transport theory} \cite{villani}. 

\subsection{Optimal Transport Theory: Preliminaries}
Here, we present some primary results from optimal transport theory which will be used in the next subsection to derive the optimal cell partitions. Optimal transport theory goes back to the Monge's problem in 1781 which is stated as follows \cite{villani}. Given piles of sands and holes with the same volume, what is the best move (transport map) to entirely fill up the holes with the minimum total transportation cost. 
In general, this theory aims to find the optimal matching between two sets of points that minimizes the costs associated with the matching between the sets. These sets can be either discrete or continuous, with arbitrary distributions (weights). Mathematically, the Monge optimal transport problem can be written as follows. Given two probability distributions $f_1$ on $X \subset \mathds{R}^n$, and $f_2$ on $Y \subset \mathds{R}^n$, find the optimal transport map $T$ from $f_1$ to $f_2$  that minimizes the following problem:\vspace{-0.4cm}
\begin{equation}
{\mathop {\min }\limits_T \int_X {c\left( {\boldsymbol{x},T(\boldsymbol{x})} \right)} f_1(\boldsymbol{x})\textrm{d}\boldsymbol{x};\,\,T: X \to Y},
\end{equation}
where $c(\boldsymbol{x},T(\boldsymbol{x}))$ denotes the cost of transporting a unit mass from a location coordinate $\boldsymbol{x}\in X$ to a location $\boldsymbol{y}=T(\boldsymbol{x})\in Y$. Also, as shown in Fig.\,\ref{MAP}, $f_1$ and $f_2$ are the source and destination probability distributions. 

\begin{figure}[!t]
	\begin{center}
		\vspace{-0.1cm}
		\includegraphics[width=6.5cm]{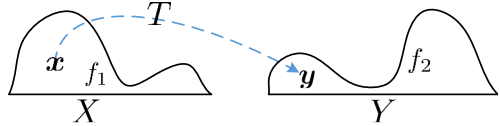}
		\vspace{-0.3cm}
		\caption{  Transport map between two probability distributions.}
		\label{MAP}\vspace{-1.2 cm}
	\end{center}
\end{figure} 
Solving the Monge's problem is challenging due its high non-linear structure \cite{villani}, and the fact that it does not necessarily admit  a solution as each point of the source distribution must be mapped to only one location at the destination. However, Kantorovich relaxed this problem by using transport plans instead of maps, in which one point can go to multiple destination points. The relaxed Monge's problem is called Monge-Kantorovich problem which is written as\cite{villani}:\vspace{-0.15 cm}
\begin{align} \label{Kantor}
&\mathop {\min }\limits_\pi \int_{X \times Y} {c\left( {\boldsymbol{x},\boldsymbol{y}} \right)} \textrm{d}\pi (\boldsymbol{x},\boldsymbol{y}),\\
\textrm{s.t.}\,\,&\int_X {\textrm{d}\pi (\boldsymbol{x},\boldsymbol{y})}  = {f_1}(\boldsymbol{x})\textrm{d}\boldsymbol{x},\,\,\,\int_Y {\textrm{d}\pi (\boldsymbol{x},\boldsymbol{y})}  = {f_2}(\boldsymbol{y})\textrm{d}\boldsymbol{y},
\end{align}
where $\pi$ represents the transport plan which is the probability distribution on $X \times Y$ whose marginals are $f_1$ and $f_2$. 

The Monge-Kantorovich problem has two main advantages compared to the Monge's problem. First, it admits a solution for any semi-continuous cost function. Second, there is a dual formulation for the Monge-Kantorovich problem that can lead to a tractable solution. The duality theorem is stated as \cite{villani} and \cite{Flippo}:

\textbf{Kantrovich Duality Theorem}:
Given the Monge-Kantorovich problem in (\ref{Kantor}) with two probability measures $f_1$ on $X \subset \mathds{R}^n$, and $f_2$ on $Y \subset \mathds{R}^n$, and any lower semi-continuous cost function $c(\boldsymbol{x},\boldsymbol{y})$, the following equality holds:
\begin{align}\label{Duality}
&\mathop {\min }\limits_\pi \int_{X \times Y} {c\left( {\boldsymbol{x},\boldsymbol{y}} \right)} \textrm{d}\pi (\boldsymbol{x},\boldsymbol{y})\\&= \max\limits_{\varphi ,\psi}\left\{ {\int_X {\varphi (\boldsymbol{x}){f_1}(\boldsymbol{x})\textrm{d}\boldsymbol{x}}  + \int_Y {\psi (\boldsymbol{y}){f_2}(\boldsymbol{y})\textrm{d}\boldsymbol{y}}\, ;\,\varphi (\boldsymbol{x}) + \psi (\boldsymbol{y}) \le c(\boldsymbol{x},\boldsymbol{y})} , \, \forall(\boldsymbol{x},\boldsymbol{y})\in X\times Y\right\},
\end{align} 
where $\varphi (\boldsymbol{x})$ and $\psi (\boldsymbol{y})$ are Kantorovich potential functions. As discussed in \cite{villani}, this duality theorem provides a tractable framework for solving the optimal transport problems. In particular, we will use this theorem to tackle our optimization problem in (\ref{Load2}).

We note that, in general, the solutions for the  Monge-Kantorovich problem do not coincide with the Monge's problem. Nevertheless, when the source distribution, $f_1$, and the cost function are continuous, these two problems are equivalent \cite{ambros}. In addition, the optimal transport map, $T: \boldsymbol{x} \to \boldsymbol{y}$, is linked with the optimal Kantorovich potential functions by:\vspace{-0.2cm} 
 \begin{equation}\label{map}
T(\boldsymbol{x}) = \left\{ {\boldsymbol{y}|{\varphi ^*}(\boldsymbol{x}) + {\psi ^*}(\boldsymbol{y}) = c(\boldsymbol{x},\boldsymbol{y})} \right\},
\end{equation}
where ${\varphi ^*}(\boldsymbol{x})$ and ${\psi ^*}(\boldsymbol{y})$ are the optimal potential functions corresponding dual formulation of the Monge-Kantorovich problem.

Given this optimal transport framework, we can solve our optimization problem in (\ref{Load2}). In particular, we model this problem as a semi-discrete optimal transport problem in which the source measure (users' distribution) is continuous while the destination (UAVs' distribution)~is~discrete.\vspace{-0.2cm}  
\subsection{Optimal Cell Partitioning}
   Using optimal transport theory, we can find the optimal cell partitions, $\mathcal{A}_i$, for which the average total data service is maximized. In our model, users have a continuous distribution, and the locations of the UAVs can be considered as discrete points. Then, the optimal cell partitions are obtained by optimally mapping the users to the UAVs. In fact, given (\ref{Load2}), the cell partitions are related to the transport map by \cite{Crippa}:
  \begin{equation} \label{cell}
  \left\{T(\boldsymbol{v}) = \sum\limits_{i \in \mathcal{M}} {{\boldsymbol{s}_i}{\mathds{1}_{{\mathcal{A}_i}}}(\boldsymbol{v})};  \int_{{\mathcal{A}_i}} {f(x,y)\textrm{d}x\textrm{d}y} = {\omega_i} \right\},
  \end{equation}
  where ${\omega _i} =  \frac{{{B_i T_i}}}{{\sum\limits_{k = 1}^M {{B_k T_k}} }}$, as given in (\ref{Constr}), is directly related to the hover time and the bandwidth of the UAVs. Also, $\mathds{1}_{\mathcal{A}_i}(\boldsymbol{v})$ is the indicator function which is equal to 1 if $\boldsymbol{v} \in {\mathcal{A}_i}$, and 0 otherwise.
  
 Therefore, the optimization problem in (\ref{Load2}) can be cast within the optimal transport framework as follows.  Given a continuous probability measure $f$ of users, and a discrete probability measure $\Gamma = \sum\limits_{i \in \mathcal{M}} {{\omega_i}{\delta _{{\boldsymbol{s}_i}}}}$  corresponding to the UAVs, we must find the optimal transport map for which $\int_\mathcal{D} {J\left( {\boldsymbol{v},T(\boldsymbol{v})} \right)} f(x,y)\textrm{d}x\textrm{d}y$ is minimized. In this case, $\delta _{{\boldsymbol{s}_i}}$ is the Dirac function, and $J$ is the transportation cost function which is used in (\ref{Load2}) and is given by:\vspace{-0.2cm} 
\begin{equation}
J(\boldsymbol{v},{\boldsymbol{s}_i}) = J(x,y,{\boldsymbol{s}_i})= {q_i}(x,y) - \lambda {\log _2}\left( {1 + {\gamma _i}(x,y)}\right).
\end{equation}    
 
 Clearly, the cost function, $J$, and the source distribution, $f$, are continuous. As a result, the Monge's problem coincides with the Monge-Kantorovich problem. Next, we propose a solution to (\ref{Load2}) by exploiting the dual formulation of the Monge-Kantorovich problem.

 \begin{theorem} \label{Theor1}
 	 \normalfont
  The optimization problem in (\ref{Load2}) is equivalent to the following unconstrained maximization problem:\vspace{-0.4cm} 
 \begin{equation}\label{F}
\mathop {\max }\limits_{{\psi _i},i \in M} \left\{ {F(\boldsymbol{\psi}^T) = \sum\limits_{i = 1}^M {{\psi _i}{\omega _i}}  + \int_\mathcal{D} {{\psi ^c}(x,y)f(x,y)\textrm{d}x\textrm{d}y} } \right\},
 \end{equation}
where $\boldsymbol{\psi}^T$ is a vector of variables ${{\psi _i}, \forall i \in M}$, and ${\psi ^c}(x,y) = \mathop {\inf }\limits_i J(x,y,{\boldsymbol{s}_i}) - {\psi _i}$.  
 \end{theorem}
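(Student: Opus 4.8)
The plan is to recognize (\ref{Load2}) as a \emph{semi-discrete} Monge optimal transport problem and then invoke the Kantorovich Duality Theorem stated above, exploiting the discreteness of the destination measure to collapse one of the two potential functions into the finite vector $\boldsymbol{\psi}^T$.

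First I would make the transport interpretation precise. The cell partitioning $\{\mathcal{A}_i\}$ together with the load constraint $\int_{\mathcal{A}_i} f \,\textrm{d}x\textrm{d}y = \omega_i$ encodes exactly a transport map $T$ of the form (\ref{cell}) that pushes the continuous source measure $f$ forward onto the discrete measure $\Gamma = \sum_{i} \omega_i \delta_{\boldsymbol{s}_i}$: assigning a point $\boldsymbol{v}=(x,y)\in\mathcal{A}_i$ to the UAV location $\boldsymbol{s}_i$ transports mass $f(x,y)$ to $\boldsymbol{s}_i$, and the constraint guarantees that the total mass delivered to $\boldsymbol{s}_i$ equals $\omega_i$. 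Under this identification the objective in (\ref{Load2}) is precisely $\int_{\mathcal{D}} J(\boldsymbol{v}, T(\boldsymbol{v})) f(x,y)\,\textrm{d}x\textrm{d}y$ with cost $J(\boldsymbol{v},\boldsymbol{s}_i) = q_i(x,y) - \lambda \log_2(1+\gamma_i(x,y))$, so minimizing (\ref{Load2}) is the Monge problem for this cost and these marginals. Since $f$ and $J$ are continuous, the Monge and Monge-Kantorovich formulations coincide (see \cite{ambros}), so no mass-splitting is lost and I may pass freely to the Kantorovich relaxation and its dual.

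Next I would apply the duality (\ref{Duality}) to this relaxation. The source potential $\varphi$ remains a function on $\mathcal{D}$, but because the destination is supported on the finite set $\{\boldsymbol{s}_i\}_{i\in\mathcal{M}}$, the destination potential $\psi$ is determined by the $M$ scalars $\psi_i := \psi(\boldsymbol{s}_i)$, so $\int_Y \psi \, \textrm{d}f_2$ becomes $\sum_{i=1}^M \psi_i \omega_i$. The dual constraint $\varphi(\boldsymbol{v}) + \psi(\boldsymbol{s}_i) \le J(\boldsymbol{v},\boldsymbol{s}_i)$ must hold for every $i$, i.e. $\varphi(\boldsymbol{v}) \le \min_i [J(\boldsymbol{v},\boldsymbol{s}_i) - \psi_i]$ pointwise. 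Since $\varphi$ enters the objective $\int_{\mathcal{D}}\varphi f$ with the positive weight $f \ge 0$, the optimal choice of $\varphi$ for any fixed $\boldsymbol{\psi}$ saturates this bound, giving the $c$-transform $\varphi(x,y) = \inf_i [J(x,y,\boldsymbol{s}_i) - \psi_i] = \psi^c(x,y)$.

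Finally, substituting $\varphi = \psi^c$ eliminates both the source potential and the inequality constraints, reducing the dual to the unconstrained maximization of $F(\boldsymbol{\psi}^T) = \sum_{i=1}^M \psi_i \omega_i + \int_{\mathcal{D}} \psi^c(x,y) f(x,y)\,\textrm{d}x\textrm{d}y$ over the vector $\boldsymbol{\psi}^T$, which is exactly (\ref{F}); equality of the optimal values of (\ref{Load2}) and (\ref{F}) is then the content of the Kantorovich Duality Theorem. I expect the main obstacle to be rigorously justifying the reduction of the general two-function dual to this finite-dimensional unconstrained form, specifically verifying that replacing $\varphi$ by its $c$-transform $\psi^c$ is optimal (the inequality can be tightened to an equality $f$-almost everywhere without decreasing the objective) and confirming that the semi-discrete structure legitimately collapses $\psi$ to the vector $\boldsymbol{\psi}^T$; the transport-map-to-partition correspondence (\ref{cell}) and continuity of $J$ and $f$ are the supporting technical points that make strong duality applicable here.
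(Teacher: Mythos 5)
Your proposal is correct and follows essentially the same route as the paper's own proof: cast (\ref{Load2}) as a semi-discrete optimal transport problem with source $f$ and discrete target $\Gamma=\sum_i \omega_i \delta_{\boldsymbol{s}_i}$, use continuity of $f$ and $J$ to identify the Monge and Monge--Kantorovich problems, apply the Kantorovich duality theorem, collapse $\psi$ to the finite vector $(\psi_i)$ via the discreteness of $\Gamma$, and saturate the dual constraint by taking $\varphi$ to be the $c$-transform $\psi^c$. Your added care in justifying the $c$-transform step (optimality of saturation under the nonnegative weight $f$) and in spelling out the map-to-partition correspondence via (\ref{cell}) only makes explicit what the paper's proof does implicitly.
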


\begin{proof}
We use the Kantorovich duality theorem (\ref{Duality}) in which $f(x,y)$ and $\Gamma = \sum\limits_{i \in \mathcal{M}} {{\omega_i}{\delta _{{\boldsymbol{s}_i}}}}$ are two probability measures, and $J(\boldsymbol{v},\boldsymbol{s})$ is the cost function. Clearly, due to the continuity of $f(x,y)$ and $J(\boldsymbol{v},\boldsymbol{s}_i)$, the Monge's problem is equivalent to the Monge-Kantorovich problem.
\begin{align}
&\min \limits_{T}\int_\mathcal{D} {J\left( {\boldsymbol{v},T(\boldsymbol{v})} \right)} f(x,y)\textrm{d}x\textrm{d}y\\&= \max \limits_{\varphi ,\psi}\left\{ {\int_\mathcal{D} {\varphi (\boldsymbol{v}){f}(x,y)\textrm{d}x\textrm{d}y}  + \int_S {\psi (s)\sum\limits_{i \in \mathcal{M}} {{\omega_i}{\delta _{{s-s_i}}}}\textrm{d}s}\, ;\,\varphi (\boldsymbol{v}) + \psi (\boldsymbol{s}) \le J(\boldsymbol{v},\boldsymbol{s})} \right\}\\
&=\max \limits_{\varphi ,\psi}\left\{ \hspace{-0.1cm}{\int_\mathcal{D} {\varphi (x,y){f}(x,y)\textrm{d}x\textrm{d}y}  + {\sum\limits_{i = 1}^M {{\psi( s_i)}{\omega _i}} }\, ;\,\varphi (x,y) + \psi (s_i) \le J(x,y,s_i)},\, \forall i\in\mathcal{M} \right\}\hspace{-0.1cm} \label{dual2}.
\end{align}	
Note that, to maximize (\ref{dual2}) given any $\psi$, we need to choose a maximum value for $\varphi$. Considering the fact that $\varphi (x,y) + \psi (\boldsymbol{s}_i) \le J(x,y,\boldsymbol{s}_i)$ must be satisfied for all $(x,y)\in \mathcal{D}$ and $\boldsymbol{s}_i\in S$, the maximum allowable value of $\varphi$ is given by:\vspace{-0.2 cm} 
\begin{equation}\label{ctrans}
\varphi (x,y) = {\psi ^c}(x,y) = \mathop {\inf }\limits_{{\boldsymbol{s}_i}} J(x,y,{\boldsymbol{s}_i}) - \psi ({\boldsymbol{s}_i}),
\end{equation}	
where ${\psi ^c}$ is called the c-transform of  $\psi$.
Now, considering ${\psi _i} = \psi ({\boldsymbol{s}_i})$, (\ref{dual2}) and (\ref{ctrans}) lead to:
\begin{align} \
\hspace{-0.1cm}&\min\limits_{T}\hspace{-0.1cm}\int_\mathcal{D} {J\left( {\boldsymbol{v},T(\boldsymbol{v})} \right)} f(x,y)\textrm{d}x\textrm{d}y=\hspace{-0.1cm}\mathop {\max }\limits_{{\psi _i},i \in M} \hspace{-0.1cm}\left\{\hspace{-0.1cm} {F(\boldsymbol{\psi}^T) = \sum\limits_{i = 1}^M {{\psi _i}{\omega _i}}  + \int_\mathcal{D} {{\psi ^c}(x,y)f(x,y)\textrm{d}x\textrm{d}y} } \right\},\\
&{\psi ^c}(x,y) = \mathop {\inf }\limits_i J(x,y,{\boldsymbol{s}_i}) - {\psi _i}.\label{psi}
\end{align}	
As a result, the optimization problem in (\ref{Load2}) is reduced to (\ref{F}) with a set of $M$ optimization variables, $\psi_i$, $\forall i\in \mathcal{M}$. This proves the theorem.
\end{proof}
Theorem \ref{Theor1} shows that the complex optimal cell partitioning problem in (\ref{Load2}) can be transformed to a tractable optimization problem with $M$ variables. In other words, by solving (\ref{F}), one can use the optimal values of $\psi_i$, $\forall i\in \mathcal{M}$  to find the optimal cell partitions. Using Theorem \ref{Theor1}, we can further proceed to solve (\ref{Load2}) by presenting the following theorem: 
\begin{theorem} \label{Concave}
	 \normalfont
Given (\ref{F}), $F$ is a concave function of variables $\psi_i$, $i\in \mathcal{M}$. Also, we have:
\begin{equation} \label{Der}
\frac{{\partial F}}{{\partial {\psi _i}}} = {\omega _i} - \int_{{\mathcal{D}_i}} {f(x,y)\textrm{d}x\textrm{d}y},
\end{equation}
where ${\mathcal{D}_i} = \left\{ {(x,y)|J(x,y,{\boldsymbol{s}_i}) - {\psi _i} \le J(x,y,{\boldsymbol{s}_j}) - {\psi _j},\forall j \ne i} \right\}$.
\end{theorem}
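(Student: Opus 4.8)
The plan is to prove the two assertions separately: first that $F$ is concave in $\boldsymbol{\psi}^T$, and then that its partial derivatives are given by (\ref{Der}).

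For concavity, I would split $F$ into its two summands. The first term $\sum_{i=1}^M \psi_i \omega_i$ is linear, hence concave, in $\boldsymbol{\psi}^T$. For the second term the key observation is that, for each fixed location $(x,y)$, the c-transform $\psi^c(x,y)=\inf_i\{J(x,y,\boldsymbol{s}_i)-\psi_i\}$ is the pointwise infimum of the $M$ maps $\boldsymbol{\psi}^T\mapsto J(x,y,\boldsymbol{s}_i)-\psi_i$, each of which is affine in $\boldsymbol{\psi}^T$ (only the $i$-th coordinate appears, with slope $-1$). Since the infimum of any family of affine functions is concave, $\psi^c(x,y)$ is concave in $\boldsymbol{\psi}^T$ for every $(x,y)$. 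Because $f(x,y)\ge 0$, integrating this concave integrand against the nonnegative weight $f$ preserves concavity, so $\int_\mathcal{D}\psi^c(x,y)f(x,y)\,\textrm{d}x\textrm{d}y$ is concave. As $F$ is the sum of a linear term and a concave term, $F$ is concave.

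For the gradient I would apply an envelope (Danskin-type) argument so as to differentiate under the integral sign. Fix an index $i$ and a location $(x,y)$ at which the infimum defining $\psi^c$ is attained at a \emph{unique} index $j^*(x,y)$. If $j^*(x,y)=i$, equivalently $(x,y)\in\mathcal{D}_i$, then in a neighborhood of the current $\boldsymbol{\psi}^T$ we have $\psi^c(x,y)=J(x,y,\boldsymbol{s}_i)-\psi_i$, whence $\partial\psi^c/\partial\psi_i=-1$; if $j^*(x,y)\ne i$, then $\psi^c$ is locally independent of $\psi_i$ and $\partial\psi^c/\partial\psi_i=0$. Thus $\partial\psi^c/\partial\psi_i=-\mathds{1}_{\mathcal{D}_i}(x,y)$ almost everywhere.

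The step requiring care, and the main technical obstacle, is the set of locations where the minimizing index is not unique, i.e.\ where $J(x,y,\boldsymbol{s}_i)-\psi_i=J(x,y,\boldsymbol{s}_j)-\psi_j$ for some $j\ne i$; there $\psi^c$ fails to be differentiable. However, by the continuity of the cost $J$ (inherited from the continuity of $\gamma_i$ and $q_i$), these tie sets are exactly the common boundaries of the regions $\mathcal{D}_i$ and form a Lebesgue-null set, so they contribute nothing to the integral. This justifies interchanging differentiation and integration (e.g.\ via dominated convergence, using integrability of $f$ over the bounded domain $\mathcal{D}$), giving
\[
\frac{\partial}{\partial\psi_i}\int_\mathcal{D}\psi^c(x,y)f(x,y)\,\textrm{d}x\textrm{d}y=-\int_{\mathcal{D}_i}f(x,y)\,\textrm{d}x\textrm{d}y.
\]
Adding the derivative $\omega_i$ of the linear term yields $\partial F/\partial\psi_i=\omega_i-\int_{\mathcal{D}_i}f(x,y)\,\textrm{d}x\textrm{d}y$, which is (\ref{Der}) and completes the proof.
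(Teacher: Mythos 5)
Your proof is correct and follows essentially the same route as the paper's: the same decomposition of $F$ into the linear term $\sum_{i=1}^M \psi_i\omega_i$ plus the integral of the c-transform, with concavity obtained because $\psi^c(x,y)$ is a pointwise infimum of affine functions of $\boldsymbol{\psi}^T$ (the paper phrases this equivalently via convexity of the hypograph of $z(\boldsymbol{\psi}^T)=\inf_i J(x,y,\boldsymbol{s}_i)-\psi_i$), and the gradient obtained by differentiating $\psi^c$ pointwise to get $\partial\psi^c/\partial\psi_i=-\mathds{1}_{\mathcal{D}_i}(x,y)$. The only difference is that you explicitly justify the interchange of differentiation and integration and dispose of the tie sets where the minimizing index is non-unique (points the paper's proof passes over silently), so your writeup is, if anything, slightly more careful on the same argument.
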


\begin{proof}
Clearly, $\sum\limits_{i = 1}^M {{\psi _i}{\omega _i}}$ is a linear function of $\psi_i$. Also, given any $i\in \mathcal{M}$, $J(x,y,{\boldsymbol{s}_i}) - {\psi _i}$ is a linear function of $\psi_i$. Let $z(\boldsymbol{\psi}^T)= \mathop {\inf }\limits_i J(x,y,{\boldsymbol{s}_i}) - {\psi _i}$ with $\boldsymbol{\psi}^T$ being a vector of all variables $\psi_i$, $i\in \mathcal{M}$. Then, we can observe that the hypograph of $z(\boldsymbol{\psi}^T)$, a set of points below $z(\boldsymbol{\psi}^T)$, is a convex set. Subsequently, considering the fact that a function is concave if and only if its hypograph is convex, we prove the concavity of $z(\boldsymbol{\psi}^T)$. Finally, since multiplying $z(\boldsymbol{\psi}^T)$ by a positive probability density function $f(x,y)$, and taking integration over $(x,y)$ does not violate the concavity, $F$ is also a concave function of $\boldsymbol{\psi}^T$. 

To find the derivative of $F$ with respect to $\psi_i$, we first compute $\frac{{\partial {\psi ^c}}}{{\partial {\psi _i}}}$. Clearly, based on (\ref{psi}), we have:\vspace{-0.3cm} 
\begin{align}
\frac{{\partial {\psi ^c}}}{{\partial {\psi _i}}} = \left\{ \begin{array}{l}
\hspace{-0.1cm}- 1,\hspace{1cm} \textrm{if}\,\,\,\, J(x,y,{\boldsymbol{s}_i}) - {\psi _i} \le J(x,y,{\boldsymbol{s}_j}) - {\psi _j},\,\,\forall j \ne i,\\
0,\hspace{1cm} \textrm{otherwise}.
\end{array} \right.
\end{align}
Then, by defining ${\mathcal{D}_i} = \left\{ {(x,y)|J(x,y,{\boldsymbol{s}_i}) - {\psi _i} \le J(x,y,{\boldsymbol{s}_j}) - {\psi _j},\forall j \ne i} \right\}$, the derivative of $F$, given in (\ref{F}), will be:
\begin{equation}
\frac{{\partial F}}{{\partial {\psi _i}}} = {\omega _i} - \int_{{\mathcal{D}_i}} {f(x,y)\textrm{d}x\textrm{d}y}\,.
\end{equation}
This proves the theorem.
\end{proof}

\setlength\textfloatsep{0.6\baselineskip plus 3pt minus 2pt}
\begin{algorithm} [t]
	\begin{small}
		\caption{Gradient method for optimal cell partitioning }\label{Gradient}
		\begin{algorithmic}[1] 
			\State {\textbf{Inputs:} $f(x,y)$, $\rho$,  $\tau_i$,  $\boldsymbol{s}_i$, $\forall i \in \mathcal{M}$.}
			\State \textbf{Outputs:} $\psi_i^*$, $\mathcal{A}_i$, $\forall i \in \mathcal{M}$. 
			\State {Set initial values for $\boldsymbol{\psi}^T_t$, ($t=1$) }.
			\While {${\left\| {\nabla F(\boldsymbol{\psi}^T_t)} \right\|_2} > \rho $} \label{stopping}
			\State {Set $k=1$, {$\epsilon_1=1.$}}
			\State {Update  $\boldsymbol{\psi}^T_{t+1}=\boldsymbol{\psi}^T_{t}+\varepsilon_k \nabla F(\boldsymbol{\psi}^T_{t})$, $k,t \in \mathds{N}$}. \label{Update} 
			\If {$F(\boldsymbol{\psi}^T_{t})< F(\boldsymbol{\psi}^T_{t+1})$} \label{stepsize1}
			\State {Go to Step \ref{if1} }.
			\Else 
			\State {Go to Step \ref{if2} }.
			\EndIf
			\While {$F(\boldsymbol{\psi}^T_{t})< F(\boldsymbol{\psi}^T_{t+1})$} \label{if1}
			\State {$k\to k+1$}.
			\State {$\epsilon_{k}=2^{k-1}\epsilon_{1}$}.
			\State {Update  $\boldsymbol{\psi}^T_{t+1}=\boldsymbol{\psi}^T_{t}+\varepsilon_k \nabla F(\boldsymbol{\psi}^T_{t})$, $k,t \in \mathds{N}$}.	
			\EndWhile	
			\While {$F(\boldsymbol{\psi}^T_{t})> F(\boldsymbol{\psi}^T_{t+1})$} \label{if2}
			\State {$k\to k+1$}.
			\State {$\epsilon_{k}=2^{-k+1}\epsilon_{1}$}.
			\State {Update  $\boldsymbol{\psi}^T_{t+1}=\boldsymbol{\psi}^T_{t}+\varepsilon_k \nabla F(\boldsymbol{\psi}^T_{t})$, $k,t \in \mathds{N}$}.			
			\EndWhile \label{stepsize2}			
			\State {$t\to t+1$}.
			\EndWhile	
			\State 	{$\psi _i^*=\boldsymbol{\psi}^T_{t}(i)$, $\forall i \in \mathcal{M}$.} \label{op}
			\State {${\mathcal{A}_i} = \left\{ {(x,y)|J(x,y,{\boldsymbol{s}_i}) - {\psi _i^*} \le J(x,y,{\boldsymbol{s}_j}) - {\psi _j^*},\forall j \ne i} \right\}$, $\forall i \in \mathcal{M}$.}\label{A}
		\end{algorithmic} 
	\end{small} 
\end{algorithm}

Theorem \ref{Concave} shows the concavity of $F$ as a function of  $\boldsymbol{\psi}^T$. Thus, the optimal values for variables $\psi_i$, $\forall i\in \mathcal{M}$, can be obtained by maximizing $F$. Then, given the optimal $\psi_i$, $\forall i\in \mathcal{M}$, equations (\ref{map}), and (\ref{cell}) are used to determine the optimal cell partitions corresponding to the optimization problem in (\ref{Load2}). 
 In this case, using the first derivative of $F$ provided in (\ref{Der}), we propose a gradient-based method to determine the optimal vector $\boldsymbol{\psi}^T$ that leads to the optimal cell partitions. Here, using the gradient descent method is simple in terms of implementation and does not require computing the Hessian matrix of $F$ which is needed in the Newton methods. In fact, given the intractable expression of $F$ in (\ref{F}), finding its second derivative is challenging and, thus, we adopt the gradient-based approach.

The proposed algorithm for finding the optimal cell partitions is shown as Algorithm 1 and proceeds as follows. The inputs are the distribution of users, hover times, locations of the UAVs, and $\rho > 0$ which is the threshold based on which the algorithm stops. In Algorithm \ref{Gradient}, we first initialize vector $\boldsymbol{\psi}^T_{t}$ with $t$ being the iteration number. Next, using (\ref{Der}), we compute $\nabla F(\boldsymbol{\psi}^T_t)$.  In Step \ref{Update}, we update $\boldsymbol{\psi}^T_t$ using step size $\epsilon_k$. The appropriate step size at each iteration is determined through Steps \ref{stepsize1} to \ref{stepsize2}. In this case, the algorithm stops when the condition in $\ref{stopping}$ is not satisfied. Clearly, due to the concavity of $F$, the optimal solution to (\ref{F}) is attained. Finally, based on the  optimal vector $\boldsymbol{\psi}^T$, the optimal cell partitions are determined using Steps \ref{op} and~\ref{A}.   

In summary, we proposed a framework for maximizing the average data service to ground users while considering some level fairness between the users. To this end, we used tools from optimal transport theory to determine the optimal cell partition associated to each UAV that services the users in the cell partition. 
In the next section, we investigate Scenario 2 in which the minimum average hover times of UAVs needed for completely servicing the users are determined.   

\section{Scenario 2: Minimum Hover Time For meeting Load Requirements} 
Our goal is to meet the load requirements of the ground users while minimizing the average hover times of the UAVs. In particular, given the demand of each user, we find the minimum required average hover time of the UAVs to completely serve the users. The hover time of each UAV depends on the distribution and load of the users, bandwidth allocation between users, and the cell partition that is assigned to the UAV. Next, we first derive an expression for the average hover time needed to serve any arbitrary cell partition under an optimal bandwidth allocation between the users. Then, we determine the optimal cell partitions for which the average hover times required for completely servicing the entire target area is minimized. We note that, given a specified partition $\mathcal{A}_i$, the hover time of the UAV $i$ needed for serving the users depends on the bandwidth allocation strategy. Hence, we first derive the minimum average hover time of each UAV that can be attained by optimal bandwidth allocation to the users in the given cell~partition. 
\begin{proposition}
	 \normalfont
Let $u(x,y)$ be the load (in bits) of a user located at $(x,y)$. The minimum average hover time of UAV $i$ for serving partition $\mathcal{A}_i$ that can be achieved by optimally allocating the bandwidth to the users, is given by:
\begin{equation} \label{tau}
{\tau _i} = \int_{{\mathcal{A}_i}} {\frac{{N u(x,y)}}{C_i^{B_i} (x,y)}f(x,y)} \textrm{d}x\textrm{d}y + {g_i}\left( {\int_{{\mathcal{A}_i}} {f(x,y)} \textrm{d}x\textrm{d}y} \right),
\end{equation}
where $N$ is the total number of users, $C_i^{B_i}={B_i{\log _2}\left(1+\gamma_i(x,y)\right)}$, and $g_i ( .)$ is the additional control time which is a function of the number of users in the cell partition.
\end{proposition}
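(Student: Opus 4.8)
The plan is to first determine the smallest effective transmission time $T_i$ with which UAV $i$ can deliver every user's demand inside $\mathcal{A}_i$ under the FDMA model, minimize this over the bandwidth allocation, and then recover $\tau_i$ by adding the control time $g_i(\cdot)$ exactly as in (\ref{Ti}). The starting observation is that meeting a user's demand fixes a \emph{bandwidth--time budget} for that user independently of how the allocation is chosen: by (\ref{Cap}) and (\ref{Loi}), a user at $(x,y)$ that is allocated bandwidth $B(x,y)$ and served for a duration $t(x,y)$ receives $t(x,y)\,B(x,y)\log_2(1+\gamma_i(x,y))$ bits, so completely serving its load $u(x,y)$ forces $t(x,y)\,B(x,y) = u(x,y)/\log_2(1+\gamma_i(x,y))$.

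I would then aggregate this per-user requirement over the whole partition. Since the spatial density of users is $N f(x,y)$ (the same fact already used in Section \ref{Sec}), integrating the bandwidth--time budget gives a total resource demand of $N\int_{\mathcal{A}_i} \frac{u(x,y)}{\log_2(1+\gamma_i(x,y))} f(x,y)\,\textrm{d}x\textrm{d}y$. On the supply side, a UAV that transmits for time $T_i$ with total bandwidth $B_i$ can deliver at most a bandwidth--time resource of $B_i T_i$, because at any instant the sum of allocated bandwidths cannot exceed $B_i$ and integrating this cap over $[0,T_i]$ yields $B_i T_i$. Feasibility of serving the entire cell therefore requires supply to dominate demand, which gives the lower bound $T_i \ge \frac{N}{B_i}\int_{\mathcal{A}_i} \frac{u(x,y)}{\log_2(1+\gamma_i(x,y))} f(x,y)\,\textrm{d}x\textrm{d}y$; rewriting with $C_i^{B_i}=B_i\log_2(1+\gamma_i(x,y))$ reproduces exactly the first term of (\ref{tau}).

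The step I expect to be the crux is achievability: I must exhibit an allocation that attains this lower bound while respecting the instantaneous budget. The natural candidate equalizes completion times by setting $B(x,y)=u(x,y)/\bigl(T_i\log_2(1+\gamma_i(x,y))\bigr)$ with $T_i$ equal to the lower bound, so that every user is served for the full duration $T_i$ and finishes exactly on time; substituting this into $\int_{\mathcal{A}_i} B(x,y)\,N f(x,y)\,\textrm{d}x\textrm{d}y$ shows it equals $B_i$, confirming feasibility and that the bound is met with equality (the resource inequality of the previous paragraph already certifies optimality, since the instantaneous cap limits any feasible scheme's usable resource to $B_i T_i$). Finally, because the control time $g_i\bigl(\int_{\mathcal{A}_i} f\,\textrm{d}x\textrm{d}y\bigr)$ depends only on the user count in $\mathcal{A}_i$ and combines additively with $T_i$ through (\ref{Ti}), adding it to the minimized $T_i$ yields the claimed expression (\ref{tau}).
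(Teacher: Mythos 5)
Your proof is correct and arrives at the same optimal allocation and the same bound as the paper, but by a recognizably different route. The paper works with a finite set $\mathcal{Y}$ of users: it writes the hover time as $\max_r t_r + g_\mathcal{Y}$ with $t_r = u_r/(W_r E_r)$, poses the min--max program (\ref{minmax}) over a \emph{static} bandwidth split $\sum_r W_r = B_i$, passes to the epigraph form $Z \ge u_r/(W_r E_r)$, and sums the implied constraints $W_r \ge u_r/(Z E_r)$ to obtain $Z \ge \frac{1}{B_i}\sum_r u_r/E_r$, exhibiting $W_r = \frac{B_i u_r/E_r}{\sum_k u_k/E_k}$ as an optimizer; the continuum expression (\ref{tau}) is then obtained at the end by averaging the sum against the density $Nf(x,y)$. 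You instead argue directly in the continuum via a bandwidth--time conservation (supply--demand) bound: total demand $N\int_{\mathcal{A}_i} \frac{u(x,y)}{\log_2(1+\gamma_i(x,y))}f(x,y)\,\textrm{d}x\textrm{d}y$ versus supply at most $B_i T_i$, followed by achievability with the equal-completion-time allocation $B(x,y)=u(x,y)/\bigl(T_i\log_2(1+\gamma_i(x,y))\bigr)$ --- which is precisely the paper's proportional rule in continuum form. The two lower bounds are computationally the same inequality ($ZB_i \ge \sum_r u_r/E_r$ \emph{is} a conservation statement), but your cut argument buys a genuine strengthening: by capping the \emph{instantaneous} aggregate bandwidth and integrating over $[0,T_i]$, you establish optimality over all time-varying schedules, whereas the paper's min--max formulation formally ranges only over static splits (its remark that the bound equals the sequential full-bandwidth service time gestures at this equivalence without proving it). Conversely, the paper's discrete treatment makes the passage from an arbitrary user set to the average hover time in (\ref{tau2}) slightly more explicit, while your continuum aggregation absorbs that step implicitly --- with the same level of rigor the paper itself employs there. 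Your final step, adding $g_i\bigl(\int_{\mathcal{A}_i} f\,\textrm{d}x\textrm{d}y\bigr)$ additively through (\ref{Ti}), matches the paper's handling of $g_\mathcal{Y}$ exactly.
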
\vspace{-0.2 cm} 
\begin{proof}
 Let $\mathcal{Y}$ be an arbitrary set of $Y$ users in a cell partition $\mathcal{A}_i$. We denote the load and bandwidth allocated to user $r$ by $u_r$ and $W_r$. Then, the time needed to serve user $r$ is given by:\vspace{-0.2 cm} 
 \begin{equation}
 {t_r} = \frac{{{u_r}}}{{{W_r}{E_r}}},
 \end{equation}
 where $E_r$ is the spectral efficiency (bit/s/Hz) at the user's location. Clearly, to completely serve all the users in $\mathcal{A}_i$, the hover time of UAV $i$ must be $\max{t_r}+g_\mathcal{Y}$, with $g_\mathcal{Y}$ being the additional control time. In this case, the hover time can be minimized by an optimal bandwidth allocation as follows:\vspace{-0.7cm}
 \begin{align} \label{minmax}
 &\mathop {\min \max }\limits_{{W_r},\,r = 1,...,Y} {t_r} + {g_\mathcal{Y}},\\
 \textrm{s.t.}\,\,\, &\sum\limits_{r = 1}^Y {{W_r}}  = B_i.
 \end{align}
 The minmax problem in (\ref{minmax}) can be transformed to:\vspace{-0.3 cm} 
 \begin{align}
 &\min Z + {g_\mathcal{Y}}, \\
 \textrm{s.t.}\,\,\, & Z \ge \frac{{{u_r}}}{{{W_r}{E_r}}},\,\, \forall r\in \mathcal{Y}\label{C1},\\
 &\sum\limits_{r = 1}^Y {{W_r}}  = B_i \label{C2}.
 \end{align}
 Now, using (\ref{C1}) and (\ref{C2}), we have $Z \ge \frac{1}{B_i}\sum\limits_{r = 1}^Y {\frac{{{u_r}}}{{{E_r}}}}$. Hence, the minimum hover time under an optimal bandwidth allocation will be $\sum\limits_{r = 1}^Y {\frac{{{u_r}}}{{B_i{E_r}}}}  + {g_\mathcal{Y}}$. Furthermore, it can be shown that ${W_r} = \frac{{B_i{u_r}}}{{{E_r}}}/\sum\limits_{k = 1}^Y {\frac{{{u_k}}}{{{E_k}}}}$ is an optimal bandwidth allocation to user $r$. Clearly, $\sum\limits_{r = 1}^Y {\frac{{{u_r}}}{{B_i{E_r}}}}$ is also equal to the total time needed for sequentially serving the users using the entire bandwidth $B_i$.  Therefore, given a cell partition $\mathcal{A}_i$ and the users' distribution, $f(x,y)$, the minimum average hover time of UAV $i$ can be given by: 
 \begin{equation} \label{tau2}
 {\tau _i} = \int_{{\mathcal{A}_i}} {\frac{{Nu(x,y)}}{{{B_i{\log _2}\left(1+\gamma_i(x,y)\right)}}}f(x,y)} \textrm{d}x\textrm{d}y + {g_i}\left( {\int_{{\mathcal{A}_i}} {f(x,y)} \textrm{d}x\textrm{d}y} \right).
 \end{equation}	
Finally, considering $C_i^{B_i}={B_i{\log _2}\left(1+\gamma_i(x,y)\right)}$, this proposition is proved.	
\end{proof}\vspace{0.1cm}

From (\ref{tau}), we can see that the hover time increases as the number of users increases. In fact, for a higher number users, both the total data transmission time and the additional control time increase. Moreover, the hover time increases as the load of the users increases. In particular, for an equal load of users, the hover time increases sublinearly \textcolor{black}{by increasing the load as the control time does not depend on the load here}. From (\ref{tau}), we can see that the hover time can be reduced by increasing the transmission rate. In addition, (\ref{tau}) implies that the UAV must hover for a longer time over a partition with higher users' density (i.e. $f(x,y)$).


Next, we minimize the total average hover time by solving the following optimization problem:
 \begin{align} \label{hover}
 &\mathop {\min }\limits_{{\mathcal{A}_i},\,  i \in \mathcal{M}} \sum\limits_{i = 1}^M {\int_{{\mathcal{A}_i}} {{\frac{{Nu(x,y)}}{{{C_i^{B_i}}(x,y)}}}f(x,y)} } \textrm{d}x\textrm{d}y + {g_i}\left( {\int_{{\mathcal{A}_i}} {f(x,y)} \textrm{d}x\textrm{d}y} \right),\\
 \textrm{s.t.}\,\, \,\,
 & {\gamma _i}(x,y) \ge {\gamma _\textrm{th}},\,\,\, {\textrm{if} \,\, (x,y)\in \mathcal{A}_i}, \,\,\, \forall i\in \mathcal{M}, \label{D2} \\
 &{\mathcal{A}_l} \cap {\mathcal{A}_m} = \emptyset ,\,\,\,\forall l \ne m \in \mathcal{M},\label{D3}\\
 &\bigcup\limits_{i \in \mathcal{M}} {{\mathcal{A}_i}}  = \mathcal{D},\label{D4}
 \end{align} 
where the objective function in (\ref{hover}) represents the total average hover time needed for providing service for the target area. Following an approach that is similar to the one we used in (\ref{Load}), our optimization problem in (\ref{hover}) can be rewritten as:
 \begin{align} \label{hover2}
 &\mathop {\min}\limits_{{\mathcal{A}_i},\,  i \in \mathcal{M}} \sum\limits_{i = 1}^M {\int_{{\mathcal{A}_i}} {{\left[\frac{Nu(x,y)}{{{C_i^{B_i}}(x,y)}}+q_i(x,y)\right]}f(x,y)} } \textrm{d}x\textrm{d}y + {g_i}\left( {\int_{{\mathcal{A}_i}} {f(x,y)} \textrm{d}x\textrm{d}y} \right),\\
 \textrm{s.t.}\,\, \,\, 
 &{\mathcal{A}_l} \cap {\mathcal{A}_m} = \emptyset ,\,\,\,\forall l \ne m \in \mathcal{M},\label{E3}\\
 &\bigcup\limits_{i \in \mathcal{M}} {{\mathcal{A}_i}}  = \mathcal{D},\label{E4}
 \end{align} 
where ${q_i}(x,y)= {\left( {\frac{{\gamma _i}(x,y)}{{{\gamma _\textrm{th}}}}} \right)^n }$ with $n \to +\infty$.

Solving (\ref{hover2}) is challenging as  the optimization variables $\mathcal{A}_i$, $ \forall i \in \mathcal{M}$, are sets of continuous partitions which are mutually dependent. Moreover, since $g_i$ is a generic function of $\mathcal{A}_i$ and  $f(x,y)$, this problem is intractable. Next, we use optimal transport theory to completely characterize the optimal solution. To this end, we first prove the existence of the solution to (\ref{hover2}) for any semi-continuous function $g_i$, $\forall i\in \mathcal{M}$. Note that, in general, (\ref{hover2}) does not necessary admit an optimal solution when the semi-continuity of $g_i$ does not hold.   

\begin{proposition}\label{exist}
	\normalfont
	The optimization problem in (\ref{hover2}) generally admits an optimal solution.
\end {proposition}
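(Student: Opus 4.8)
The plan is to prove existence by the direct method of the calculus of variations, after recasting the partitioning problem~(\ref{hover2}) as a semi-discrete optimal transport problem. First I would identify any admissible collection of disjoint partitions $\{\mathcal{A}_i\}_{i\in\mathcal{M}}$ covering $\mathcal{D}$ with a transport plan $\pi$ on the product space $\mathcal{D}\times\mathcal{S}$, where $\mathcal{S}=\{\boldsymbol{s}_1,\dots,\boldsymbol{s}_M\}$ is the finite set of UAV locations: the assignment $\mathds{1}_{\mathcal{A}_i}$ corresponds to the restriction of $\pi$ to $\mathcal{D}\times\{\boldsymbol{s}_i\}$, whose density against $f$ I denote $\rho_i$, so that $\sum_i\rho_i\equiv 1$. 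Writing $c(x,y,\boldsymbol{s}_i)=\frac{Nu(x,y)}{C_i^{B_i}(x,y)}+q_i(x,y)$ for the per-unit cost and $m_i(\pi)=\pi(\mathcal{D}\times\{\boldsymbol{s}_i\})=\int_{\mathcal{A}_i}f\,\textrm{d}x\textrm{d}y$ for the mass served by UAV $i$, the objective of~(\ref{hover2}) becomes the functional
\[
\mathcal{F}(\pi)=\int_{\mathcal{D}\times\mathcal{S}} c\,\textrm{d}\pi+\sum_{i=1}^M g_i\!\left(m_i(\pi)\right),
\]
to be minimized over the set $\Pi_f$ of probability measures on $\mathcal{D}\times\mathcal{S}$ whose first marginal equals the user distribution $f$.

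Next I would establish compactness of the feasible set. Since $\mathcal{D}$ is a bounded geographical region and $\mathcal{S}$ is finite, $\mathcal{D}\times\mathcal{S}$ is compact, so the family $\Pi_f$ is tight and by Prokhorov's theorem is sequentially compact in the weak-$*$ topology. The marginal constraint passes to the limit—if $\pi_k\rightharpoonup\pi$ then the first marginals converge weakly and remain equal to $f$—so $\Pi_f$ is weak-$*$ closed and hence weak-$*$ compact. I would also record that $\mathcal{F}$ is bounded below, since $u\ge 0$, $C_i^{B_i}>0$, $q_i\ge 0$, and the control times $g_i\ge 0$; the infimum is therefore finite as soon as~(\ref{hover2}) admits one partition of finite cost (feasibility of the coverage constraint), and a minimizing sequence $\{\pi_k\}\subset\Pi_f$ can be extracted with a weak-$*$ limit $\pi^\star$.

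The decisive step is lower semicontinuity of $\mathcal{F}$ along this minimizing sequence, and this is where the semicontinuity hypothesis on $g_i$ is indispensable. The linear part $\pi\mapsto\int c\,\textrm{d}\pi$ is lower semicontinuous for the weak-$*$ topology because $c$ is lower semicontinuous and bounded below—it is continuous in $(x,y)$ for each fixed $\boldsymbol{s}_i$, and $q_i$ is a lower-semicontinuous limit of the continuous penalties $(\gamma_i/\gamma_\textrm{th})^n$—so the standard portmanteau estimate gives $\liminf_k\int c\,\textrm{d}\pi_k\ge\int c\,\textrm{d}\pi^\star$. For the nonlinear part, each mass functional $\pi\mapsto m_i(\pi)$ is continuous under weak-$*$ convergence, since $\{\boldsymbol{s}_i\}$ is clopen in the discrete space $\mathcal{S}$ and hence $\mathds{1}_{\mathcal{D}\times\{\boldsymbol{s}_i\}}$ is continuous; composing with the lower-semicontinuous $g_i$ yields $\liminf_k g_i(m_i(\pi_k))\ge g_i(m_i(\pi^\star))$. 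Summing, $\liminf_k\mathcal{F}(\pi_k)\ge\mathcal{F}(\pi^\star)$, so $\pi^\star$ attains the infimum. I expect the verification for the $g_i$ terms to be the main obstacle: if $g_i$ were merely measurable, the composition $g_i\circ m_i$ could jump downward in the limit and the minimum need not be attained, which is exactly the degenerate situation flagged in the remark preceding the statement.

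Finally, to exhibit an admissible \emph{partition} rather than merely a plan, I would freeze the optimal masses $m_i^\star=m_i(\pi^\star)$ and note that $\pi^\star$ must also minimize the purely linear cost $\int c\,\textrm{d}\pi$ among plans with first marginal $f$ and second marginal $\sum_i m_i^\star\delta_{\boldsymbol{s}_i}$. This is a standard semi-discrete optimal transport problem with continuous source $f$ and continuous cost, so by the Monge--Kantorovich equivalence recalled earlier in the excerpt it is solved by a transport map $T^\star:\mathcal{D}\to\mathcal{S}$; the preimages $\mathcal{A}_i^\star=(T^\star)^{-1}(\boldsymbol{s}_i)$ then constitute disjoint partitions satisfying~(\ref{E3})--(\ref{E4}) that attain the minimum of~(\ref{hover2}), which proves that the problem admits an optimal solution.
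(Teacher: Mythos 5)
Your proof is correct, but it takes a genuinely different route from the paper's. The paper argues in two stages: it fixes the mass vector $\boldsymbol{a}=(a_1,\dots,a_M)$ with $a_i=\int_{\mathcal{A}_i}f(x,y)\,\mathrm{d}x\mathrm{d}y$, observes that for fixed $\boldsymbol{a}$ (so that the $g_i(a_i)$ terms are constants) problem (\ref{hover2}) reduces to a classical semi-discrete optimal transport problem that admits a minimizer, and then invokes compactness of the unit simplex $E\subset\mathds{R}^{M}$ to attain the minimum over all admissible $\boldsymbol{a}$. You instead run the direct method once, at the level of transport plans $\Pi_f$: Prokhorov compactness of $\Pi_f$ on the compact product $\mathcal{D}\times\mathcal{S}$, weak-$*$ lower semicontinuity of the linear cost, weak-$*$ continuity of the mass functionals $m_i$ (valid precisely because $\mathcal{S}$ is finite and discrete, so $\mathds{1}_{\mathcal{D}\times\{\boldsymbol{s}_i\}}$ is continuous), and lower semicontinuity of $g_i\circ m_i$; you then recover a genuine partition by freezing the optimal masses and invoking the Monge--Kantorovich equivalence for an absolutely continuous source. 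Each approach buys something. The paper's simplex decomposition is shorter and directly sets up the perturbation machinery later used in Theorem \ref{Optimal} (the $g_i'(a_i)$ terms), but its final Weierstrass step tacitly requires lower semicontinuity on $E$ of the value function $\boldsymbol{a}\mapsto \mathrm{OT}(\boldsymbol{a})+\sum_{i=1}^{M} g_i(a_i)$ --- which is exactly where the semicontinuity hypothesis on $g_i$ and the stability of the transport cost in the target weights must enter --- and the paper never verifies this. Your plan-level argument makes that dependence explicit and closes the gap, at the cost of one extra step (plan-to-map recovery), which you handle correctly: with the marginals frozen, a minimizer of $\mathcal{F}$ must minimize the purely linear cost, and semi-discrete transport from a continuous source is realized by a map whose preimages give disjoint cells covering $\mathcal{D}$, i.e.\ satisfying (\ref{E3})--(\ref{E4}).
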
\vspace{-0.2 cm} 
	\begin{proof}
		Let ${a_i} = \int_{{\mathcal{A}_i}} {f(x,y)\textrm{d}x\textrm{d}y}$, then we also define a unit simplex as follows:
		\begin{equation}
		E = \left\{ {\boldsymbol{a} = \left( {{a_1},{a_2},...,{a_{M}}} \right) \in {\mathds{R}^{M}};\,\,\sum\limits_{k = 1}^{M} {{a_i}= 1}, {a_i} \ge 0, \forall i\in \mathcal{M} } \right\}.
		\end{equation}
		
		Clearly, given any vector $\boldsymbol{a}$, problem (\ref{hover2}) can be considered as a classical semi-discrete optimal transport problem. In particular, considering  $f(\boldsymbol{v})=f(x,y)$, and $c\left( {\boldsymbol{v},{\boldsymbol{s}_i}} \right) = \frac{u(x,y)}{{{C_i^{B_i}}(x,y)}}+q_i(x,y)$, (\ref{hover2}) can be transformed to:\vspace{-0.35 cm} 
		\begin{equation} \label{Opt5}
		\mathop {\min }\limits_T \int_\mathcal{D} {c\left( {\boldsymbol{v},\boldsymbol{s}} \right)} f(\boldsymbol{v})\textrm{d}\boldsymbol{v}, \,\, \boldsymbol{s}=T(\boldsymbol{v}),
		\end{equation}
		where $T$ is the transport map which is associated to cell partitions $\mathcal{A}_i$ by :\vspace{-0.15cm}
		\begin{equation}
		\left\{T(\boldsymbol{v}) = \sum\limits_{i=1}^ {M} {\boldsymbol{s}_i{\mathds{1}_{{\mathcal{A}_i}}}(\boldsymbol{v})}; \int_{{\mathcal{A}_i}} f(\boldsymbol{v})\textrm{d}\boldsymbol{v}={a_i}\right\}.
		\end{equation}

	Clearly, as discussed in Section \ref{Sec}, the optimal transport problem in (\ref{Opt5}) admits a solution. Hence, for any $\boldsymbol{a}\in E$, the problem in (\ref{hover2}) has an optimal solution. Since $E$ is a unit simplex in $\mathds{R}^{M}$ which is a non-empty and compact set, the problem admits an optimal solution over the entire $E$.
	Thus, the proposition is proved.
\end{proof}

Next, we completely characterize the solution space of problem (\ref{hover2}) which allows finding the optimal cell partitioning and the average hover time of each UAV.\vspace{-0.3cm}
\begin{theorem} \label{Optimal}
	\normalfont
	The optimal hover time of UAV $i$ required to completely service the target area is given by:	\vspace{-0.2cm} 
	\begin{equation}
	{\tau_i^*} = \int_{{\mathcal{A}_i^*}} {\frac{{Nu(x,y)}}{{{C_i^{B_i}}(x,y)}}f(x,y)} \textrm{d}x\textrm{d}y + {g_i}\left( {\int_{{\mathcal{A}_i^*}} {f(x,y)} \textrm{d}x\textrm{d}y} \right),
	\end{equation}
	where $\mathcal{A}_i^*$ is the optimal cell partition given by:	
\begin{equation}
		\hspace{-0.00cm} {\mathcal{A}_i^*} \hspace{-0.06cm}=	\hspace{-0.1cm} \left\{\hspace{-0.05cm}\hspace{-0.07cm}{(x,y)|\, \frac{Nu(x,y)}{C_i^{B_i}(x,y)}+q_i(x,y)+g'_i(a_i) \hspace{-0.05cm} \le \hspace{-0.03cm} \frac{Nu(x,y)}{C_j^{B_j}(x,y)}+q_j(x,y)+g'_j(a_j),\forall j \ne i \in \mathcal{M}	\hspace{-0.07cm}} \right\}	\hspace{-0.1cm}, \hspace{-0.17cm} \label{cells}
	\end{equation}
	where ${a_i} = \int_{{\mathcal{A}_i}} {f(x,y)} \textrm{d}x\textrm{d}y$, and $N$ is the total number of users.
\end{theorem}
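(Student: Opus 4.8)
The plan is to avoid re-deriving the full Kantorovich dual and instead combine the existence result of Proposition~\ref{exist} with a direct first-order boundary-perturbation argument. The key structural observation is that the objective in (\ref{hover2}) differs from a pure semi-discrete transport cost only through the additive control terms $g_i(a_i)$ with $a_i=\int_{\mathcal{A}_i}f\,\textrm{d}x\textrm{d}y$; since these terms couple each cell to its \emph{own} mass, the effective per-point cost governing the assignment should be the transport cost $\frac{Nu(x,y)}{C_i^{B_i}(x,y)}+q_i(x,y)$ augmented by the marginal control cost $g'_i(a_i)$. This is precisely the quantity appearing in (\ref{cells}), so the proof amounts to showing that a minimizer assigns each point to the index achieving the smallest augmented cost.

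First I would invoke Proposition~\ref{exist} to guarantee a minimizer $\{\mathcal{A}_i^*\}$ over the simplex $E$, and set $a_i=\int_{\mathcal{A}_i^*}f\,\textrm{d}x\textrm{d}y$ and $c_i(x,y)=\frac{Nu(x,y)}{C_i^{B_i}(x,y)}+q_i(x,y)$, so that the objective reads $G=\sum_{i=1}^{M}\left[\int_{\mathcal{A}_i^*} c_i f\,\textrm{d}x\textrm{d}y+g_i(a_i)\right]$. The central step is then a mass-transfer perturbation: I reassign an infinitesimal region $\Delta\subset\mathcal{A}_i^*$ around a point $(x_0,y_0)$, of mass $\epsilon=\int_\Delta f$, from cell $i$ to cell $j$. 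Because mass is only moved, the constraint $\sum_k a_k=1$ is automatically preserved, and the induced change is $\Delta G = \int_\Delta (c_j-c_i) f\,\textrm{d}x\textrm{d}y + [g_j(a_j+\epsilon)-g_j(a_j)] + [g_i(a_i-\epsilon)-g_i(a_i)]$, which to leading order equals $(c_j(x_0,y_0)-c_i(x_0,y_0)+g'_j(a_j)-g'_i(a_i))\,\epsilon$. Optimality forces $\Delta G\ge 0$ for every admissible $\Delta$ and every $j\ne i$; letting $\epsilon\to0$ yields $c_i(x_0,y_0)+g'_i(a_i)\le c_j(x_0,y_0)+g'_j(a_j)$ at almost every point of $\mathcal{A}_i^*$, which is exactly (\ref{cells}). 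The claimed $\tau_i^*$ then follows immediately by substituting $\mathcal{A}_i^*$ into the minimum-hover-time formula~(\ref{tau}).

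The main obstacle I anticipate is making the first-order expansion rigorous. This requires (one-sided) differentiability of the control functions $g_i$ so that $g'_i(a_i)$ is well defined, and continuity of $c_i$ together with positivity of $f$ near $(x_0,y_0)$ so that $\int_\Delta(c_j-c_i)f\approx(c_j-c_i)(x_0,y_0)\,\epsilon$. Additional care is needed because $q_i\to+\infty$ outside the feasible SINR region: the perturbation must be restricted to points where the reassignment keeps $\gamma_j(x_0,y_0)\ge\gamma_\textrm{th}$, since otherwise $c_j$ is effectively infinite and the inequality holds trivially. Finally, the derived condition is only \emph{necessary} for a local perturbation; to upgrade it to a characterization of the global optimum I would lean on the convexity of the reduced problem in the mass vector $\boldsymbol{a}$ — the inner transport value being convex as a supremum of the linear dual functionals of Theorem~\ref{Theor1} — so that, under convex $g_i$, stationarity is also sufficient and the minimizer furnished by Proposition~\ref{exist} must satisfy (\ref{cells}).
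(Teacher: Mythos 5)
Your proposal coincides essentially step for step with the paper's own proof in Appendix A: both invoke Proposition~\ref{exist} for existence, perturb the optimal partition by transferring an infinitesimal-mass region ${B_\varepsilon }({\boldsymbol{v}_o})$ from one cell to another, normalize by $a_\varepsilon=\int_{B_\varepsilon(\boldsymbol{v}_o)}f\,\textrm{d}x\textrm{d}y$ and let $\varepsilon\to 0$ (a Lebesgue-differentiation limit) to obtain the augmented-cost inequality $\frac{Nu}{C_i^{B_i}}+q_i+g'_i(a_i)\le \frac{Nu}{C_j^{B_j}}+q_j+g'_j(a_j)$, and then substitute $\mathcal{A}_i^*$ into (\ref{tau}) to get $\tau_i^*$. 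Your closing remark on upgrading necessity to sufficiency via convexity in the mass vector is a refinement the paper itself does not carry out — its argument, like your main one, establishes only the first-order necessary condition — but it does not alter the route.
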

\begin {proof}
See Appendix A.
\end{proof}
Using Theorem \ref{Optimal}, we can find the optimal cell partitions as well as the minimum hover time needed to completely service the users. In fact, the target area is optimally partitioned in a way that the average hover time that the UAVs use to serve their users is minimized. Note that, for the special case where $g_i'=0$, the result in (\ref{cellPar}) corresponds to the classical weighted Voronoi diagram. In this case, the users are assigned to the UAVs based on the maximum received signal strength criterion. Consequently, the users can be served with a maximum rate and, hence, the total required hover time is minimized. However, in general, the classical weighted Voronoi diagram is not optimal \cite{Alonso} as the effect of control time is ignored while generating cell partitions. 
 
From (\ref{cells}), we can see that there is a mutual dependency between $a_i$ and $\mathcal{A}_i$, $\forall i \in \mathcal{M}$. \textcolor{black}{Therefore, solving (\ref{cells}) does not have an explicit form and, hence, an iterative-based approach is needed to find a solution to (\ref{cells})}. Next, given the results of Theorem \ref{Optimal}, we present an iterative algorithm based on \cite{Crippa} and shown in Algorithm 2, that solves (\ref{cells}) and finds the optimal cell partitions and the average hover time of each UAV. This algorithm guarantees the convergence to the optimal solution within a reasonable number of iterations \cite{Crippa}. \textcolor{black}{In addition, this algorithm is practical to implement as its complexity grows linearly with the size of the area $\mathcal{D}$}.   

\begin{algorithm}[!t]
	\begin{small}
		\caption{Iterative algorithm for optimal cell partitions and hover times}
		\label{Fixed}
		\begin{algorithmic}[1] 
			\State \textbf{Inputs:} $f(x,y)$, $u(x,y)$, $Z$,  $g_i$,  $\boldsymbol{s}_i$, $\forall i \in \mathcal{M}$.
			\State \textbf{Outputs:} $\mathcal{A}_i^*$, $\tau_i^*$, $\forall i \in \mathcal{M}$. 
			\State {Set $t=1$, generate an initial cell partitions $\mathcal{A}_i^{(t)}$, and set $\phi _i^{(t)}(x,y)=0$, $\forall i \in\mathcal{M}$}.
			
			\While {$t<Z$} 
			
			\State {Compute $\phi _i^{(t + 1)}(x,y) = \left\{ \begin{array}{l}
				\left( {1 - 1/t} \right)\phi _i^{(t)}(x,y),\hspace{1.8cm} \textrm{if}\,\,(x,y) \in {\mathcal{A}_i}^{(t)},\\
				1 - \left( {1 - 1/t} \right)\left( {1 - \phi _i^{(t)}(x,y)}\right),\,\, \textrm{otherwise}.
				\end{array} \right.$  } \label{phi}

			\State {Compute ${a_i} = \int_\mathcal{D} {\left( {1 - \phi _i^{(t + 1)}(x,y)} \right)f(x,y)} \textrm{d}x\textrm{d}y$, $\forall i \in \mathcal{M}$}. \label{1}

			\State {$t \to t+1$}.\label{2}
			
			\State {Update cell partitions using (\ref{cellPar})}.\label{3}
			

			
			\EndWhile
			\State 	{$\mathcal{A}_i^*= \mathcal{A}_i^{(t)}$,} \label{A*}
			
			\State {Compute $\tau_i^{*}$ using (\ref{tau}) based on $\mathcal{A}_i^*$, $\forall i \in \mathcal{M}$.}\label{tau*}
			
		\end{algorithmic}
	\end{small} 
\end{algorithm}

Algorithm 2 for finding the optimal cell partitions as well as the average hover times proceeds as follows. The inputs are load and distribution of the users, locations of the UAVs, control time function, and the number of iterations, $L$. Here, we use $t$ to represent the iteration number. First, we generate initial cell partitions $\mathcal{A}_i^{(t)}$, and set $\phi _i^{(t)}(x,y)=0$, $\forall i \in \mathcal{M}$, with $\phi _i^{(t)}(x,y)$ being a pre-defined parameter that will be used to update the cell partitions. Next, we update  $\phi _i^{(t+1)}(x,y)$, and compute $a_i$ in step \ref{1}. Then, in step \ref{3}, we update cell partitions by using (\ref{cellPar}). Finally, at the end of the iteration, the optimal cell partitions and the minimum average hover time of the UAVs are determined.\vspace{-0.2cm}

\section{Simulation Results and Analysis}

For our simulations, we consider a rectangular area of size $1000\,\text{m}\times 1000 \,\text{m}$ in which the ground users are distributed according to a two-dimensional truncated Gaussian distribution which is suitable to model a hotspot area and is given by \cite{ghazzai}:\vspace{-0.03cm} 
\begin{equation} %
f(x,y) = \frac{1}{\eta}{\rm{exp}}{\Big( {\frac{{{L_x} - {\mu _x}}}{{\sqrt {2{\sigma _x}} }}} \Big)^2}{\rm{exp}}{\Big( {\frac{{{L_y} - {\mu _y}}}{{\sqrt {2{\sigma _y}} }}} \Big)^2}, \label{Trunc}\vspace{-0.1cm}
\end{equation}
where $\eta= 2\pi {\sigma _x}{\sigma _y}{\rm{erf}}\left( {\frac{{{L_x} - {\mu _x}}}{{\sqrt {2{\sigma _x}} }}} \right){\rm{erf}}\Big( {\frac{{{L_y} - {\mu _y}}}{{\sqrt {2{\sigma _y}} }}} \Big)$, and the size of the area is $L_x\times L_y$. Also, ${\mu _x}$, ${\sigma _x}$, ${\mu _y}$, and ${\sigma _y}$ are the mean and standard deviation values of the $x$ and $y$ coordinates, and ${\rm{erf(}}z{\rm{)}} = \frac{2}{{\sqrt \pi  }}\int\limits_0^z {{e^{ - {t^2}}}{\rm{d}}t}$.
In this case, (${\mu _x}$, ${\mu _y}$) represents the center of the hotspot, and the density of the users around the center is inversely proportional to the values ${\sigma _x}$ and ${\sigma _y}$. In our simulations, we consider $\sigma_x=\sigma_y=\sigma_o$. Note that, although we consider the truncated Gaussian distribution of users, our analysis can also accommodate any any other arbitrary distribution. Moreover, we deploy the UAVs based on a grid-based deployment with an altitude of 200\,m. Unless stated otherwise, we consider a full interference scenario with an interference factor $\beta=1$. For the control time function, we consider ${g_i}(N{a_i}) = \alpha {\left( {N{a_i}} \right)^2}$, with $\alpha$ being an arbitrary constant factor. This function is a  reasonable choice in our model as it is a superlinear function of the number of users and its value can be adjusted by factor $\alpha$. However, any arbitrary continuous control function can also be considered in our model. The simulation parameters are listed in Table I. We compare our results, obtained based on the proposed optimal cell partitioning approach, with the classical weighted Voronoi diagram baseline. Note that, all statistical results are averaged over
a large number of independent runs. Next, we present the results corresponding to Scenario 1 and Scenario 2, separately. \vspace{-0.3cm}

\begin{table}[!t]
	\normalsize
	\begin{center}
		\caption{\small Simulation parameters.}
		\vspace{-0.1cm}
		\label{TableP}
		\resizebox{8.1cm}{!}{
			\begin{tabular}{|c|c|c|}
				\hline
				\textbf{Parameter} & \textbf{Description} & \textbf{Value} \\ \hline \hline
				$f_c$	&     Carrier frequency     &      2\,GHz     \\ \hline 
				$P_i$	&     UAV transmit power     &     0.5\,\textrm{W}    \\ \hline
				
				$N_o$	&     Noise power spectral     &        -170\,dBm/Hz  \\ \hline
				$N$	&     Number of ground users     &   300 \\ \hline
				
				$\mu_\textrm{LoS}$	&     Additional path loss to free space for LoS     &   3\,dB \\ \hline
				
				$\mu_\textrm{NLoS}$	&     Additional path loss to free space for NLoS      &   23\,dB \\ \hline
				
				$B$	&    Bandwidth      &   1\,MHz \\ \hline
				
				$\alpha$	&    Control time factor      &   0.01 \\ \hline
				
				$h$	&     UAV's altitude     &   200\,m \\ \hline
				
				$u$	&     Load per user     &   100\,Mb \\ \hline
				
				$\mu_x, \mu_y$& Mean of the truncated Gaussian distribution & 250\,m, 330\,m\\ \hline
				
				$b_1, b_2$	&     Environmental parameters (dense urban)     &   0.36, 0.21 \cite{HouraniModeling} \\ \hline
			\end{tabular}}
			
		\end{center}\vspace{-0.5cm}
	\end{table}

%

\subsection{Results for Scenario 1}

\begin{figure}[t]
	\centering
\hspace{-0.9cm}	\begin{subfigure}[t]{0.4\textwidth}
		\begin{center}
			\vspace{-0.3cm}
			\includegraphics[width=8.1cm]{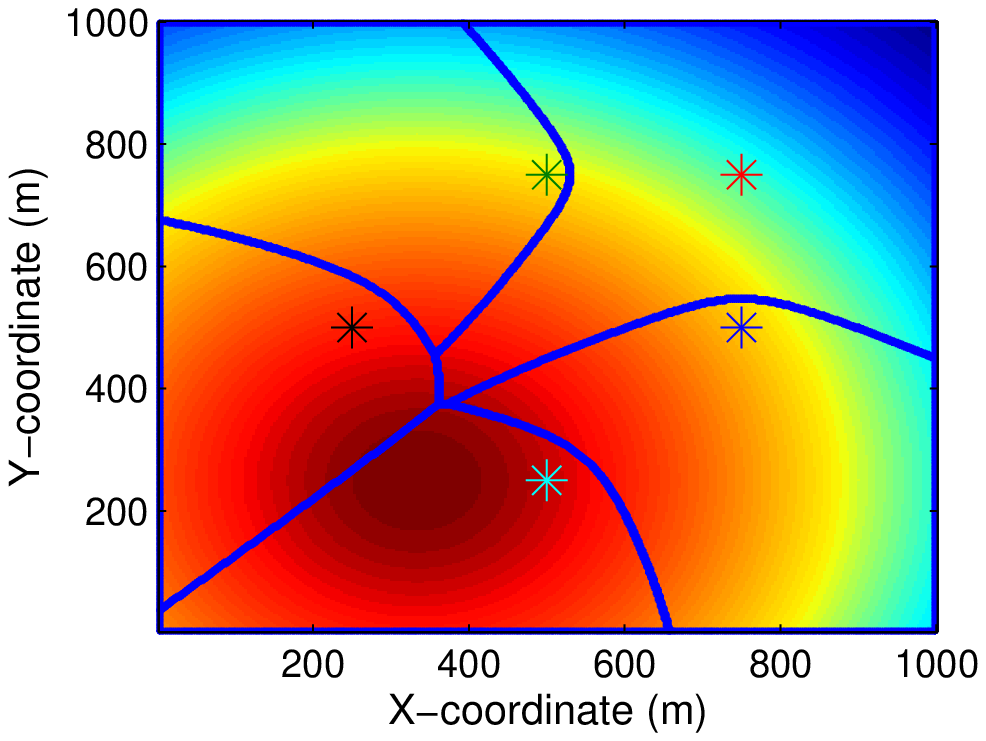}
			\vspace{-0.9cm}
			\caption{ \small Proposed optimal cell partitions.}\vspace{-.2cm}
			\label{Ayy}
		\end{center}
	\end{subfigure}%
	~	\hspace{0.5cm}
	\begin{subfigure}[t]{0.4\textwidth}
		\begin{center}
			\vspace{-0.3cm}
			\includegraphics[width=8.1cm]{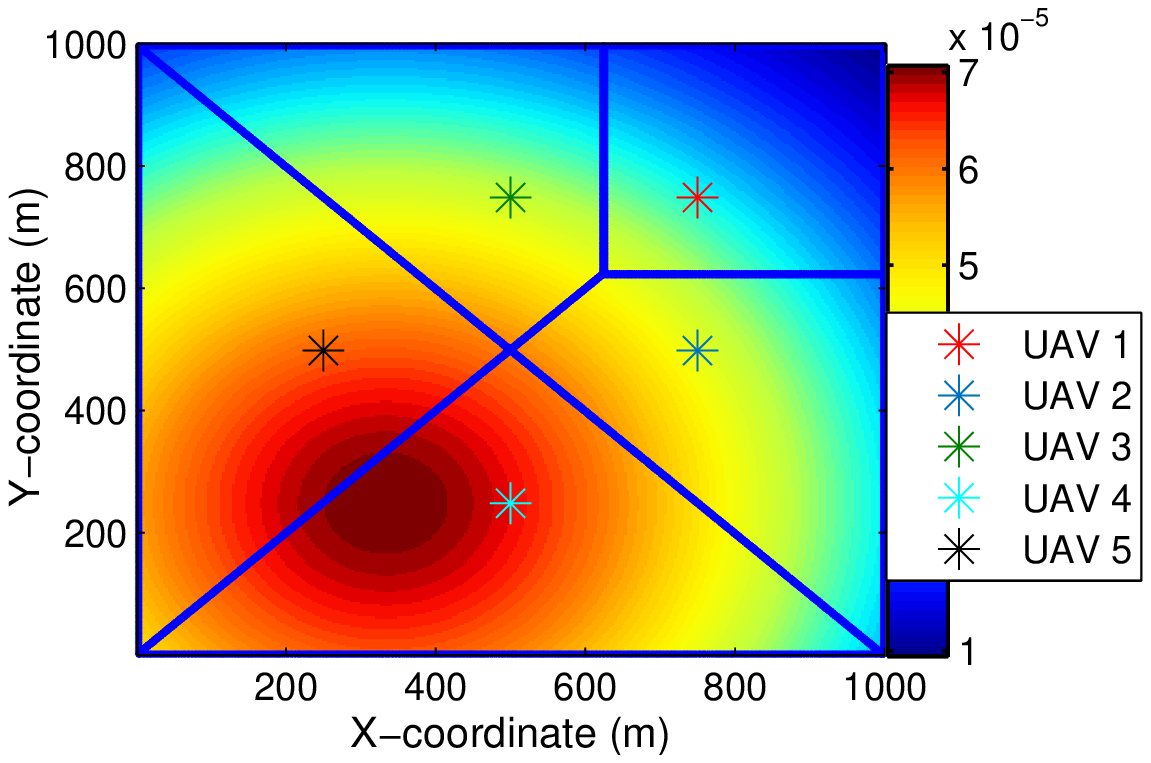}
			\vspace{-0.9cm}
			\caption{ \small Weighted Voronoi diagram.}\vspace{-.2cm}
			\label{Voro}
		\end{center}
	\end{subfigure}\vspace{-0.10cm}
	
	\caption{  Cell partitions associated to UAVs given the non-uniform spatial distribution~of~users. \vspace{-.18cm} 	\label{CellPart}}\vspace{.3cm} 
\end{figure}



Fig.\,\ref{CellPart} shows the proposed optimal cell partitions and the classical weighted Voronoi diagram. In this case, we consider 5 UAVs that provide service for the non-uniformly distributed ground users (truncated Gaussian distribution with $\sigma_o=1000$\,m). Moreover, in Scenario 1, we assume that the maximum hover time of each UAV is 30 minutes which corresponds to the typical hover time for quadcopter UAVs \cite{Quadcopter}.  In Fig.\,\ref{CellPart}, areas shown by a darker color have a higher population density. As we can see from Fig.\,\ref{Voro}, the cell partitions associated with UAVs 4 and 5 have significantly more users than cell partition 1. Therefore, given the limited hover times, users located at cell partitions 4 and 5 cannot be fairly served by UAVs. However, in the proposed optimal cell partitioning case (obtained by Algorithm 1), the cell partitions change such that the average data service under a fair resource allocation constraint is maximized. For instance, as shown in Fig.\,\ref{Ayy}, the size of cell partitions 4 and 5 decreases compared to the weighted Voronoi diagram. As a result, the proposed cell partitions lead to a higher level of fairness among the users than the weighted Voronoi case.

\begin{figure}[!t]
	\begin{center}
		\vspace{-0.1cm}
		\includegraphics[width=8.5cm]{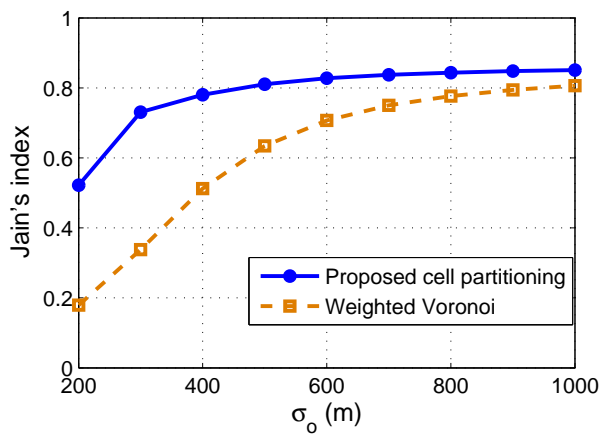}
		\vspace{-0.3cm}
		\caption{ Jain's fairness index for average data service to users.}\vspace{-.32cm}
		\label{Jains}
	\end{center}\vspace{-.12cm}
\end{figure}

To show how fairly the users can be served in different cases of cell partitioning, we use the Jain's fairness index. This metric can be applied to any performance metric such as rate or service load and it is maximized when all users receive an equal service. Here, we compute the Jain's index based on the data service that is offered to each user. The Jain's index is given~by~\cite{jain}:\vspace{-0.1cm}
\begin{equation}
F_\text{Jain}({l_1},{l_2},...,{l_N}) = {{{\left( {\sum\limits_{i = 1}^N {{l_i}} } \right)}^2}}\times\Big({N\sum\limits_{i = 1}^N {{l_i}^2} }\Big)^{-1},
\end{equation}
where $N$ is the number of users, and $l_i$ is the data service to user $i$. Clearly, $1/N \le F_\text{Jain} \le 1$, with $F_\text{Jain}=1/N$ and $F_\text{Jain}=1$ indicating the lowest and  highest level of fairness. 

Fig.\,\ref{Jains} shows the Jain's fairness index for different values of $\sigma_o$ which is given in (\ref{Trunc}). In this figure, as $\sigma_o$ increases the spatial distribution of users becomes closer to a uniform distribution. As we can see from this figure, the minimum Jain's index corresponding to the proposed cell partitioning method is above 0.5. However, in the weighted Voronoi case, it can decrease to 0.18 for a highly non-uniform distribution of users with $\sigma_o=200$\,m. This is due to the fact that, in the Voronoi case, users located in highly congested partitions receive lower service than the partitions with low number of users. In the proposed approach, however, the resources (hover time and bandwidth) are fairly shared between the users thus leading to a higher fairness index. From Fig.\,\ref{Jains} we can also observe that, for higher values of $\sigma_o$ (more uniform distribution), the fairness index for the proposed approach becomes closer to the weighted Voronoi case.


Fig.\,\ref{NumbUsers} shows the average number of users in each cell partition. Clearly, in the Voronoi case, the average number of users per cell significantly varies for different cell partitions. For instance, the average number of users in cell 5 is three times higher than cell 3. Consequently, compared to cell 5, users in cell 3 will receive lower data service from their associated UAV. However, in the proposed approach, the cell partitions associated with the UAVs are formed such that the number of users per cell be proportional to the bandwidth and hover time of the UAVs. In this case, given equal bandwidth and hover times of UAVs, the cell partitions contains an equal number of users. Therefore, our approach avoids generating unbalanced cell partitions and, hence, it leads to a higher level of fairness compared to the classical Voronoi approach.    

\begin{figure}[!t]
	\begin{center}
		\vspace{-0.1cm}
		\includegraphics[width=8.2cm]{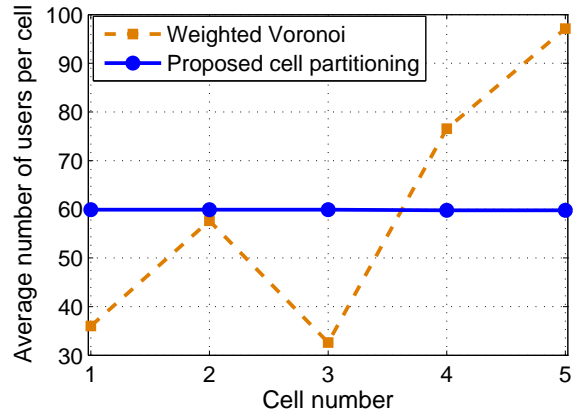}
		\vspace{-0.3cm}
		\caption{ Average number of users per cell partition.\vspace{-.32cm}}
		\label{NumbUsers}
	\end{center}\vspace{-.06cm}
\end{figure}


Fig.\,\ref{IntFactor} shows the average total data service as a function of the interference factor, $\beta$ used in (\ref{gamma}). From this figure we can see that, as the interference between UAVs decreases, the total data service that they can provide to the ground users increases. For instance, by decreasing $\beta$ from 1 (full interference case) to 0.1,  the total data service increases by a factor of 3 when 5 UAVs are deployed. Moreover, Fig.\,\ref{IntFactor} shows that the service gain achieved by using a higher number of UAVs is significant only when the interference between the UAVs is highly mitigated (low values of $\beta$). For example, increasing the number of UAVs from 5 to  10 can lead to 56\% data service gain for $\beta=0.1$, while this gain is only 5\% in the full interference case. Therefore, deploying more UAVs is beneficial in terms of data service if the interference between the UAVs is properly mitigated.

\begin{figure}[!t]
	\begin{center}
		\vspace{-0.1cm}
		\includegraphics[width=8.3cm]{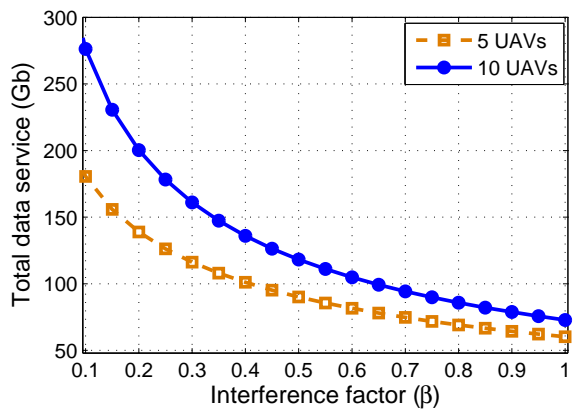}
		\vspace{-0.4cm}
		\caption{ Average data service versus interference factor.\vspace{-.32cm}}
		\label{IntFactor}
	\end{center}\vspace{-.3cm}
\end{figure}

In Fig.\,\ref{Load_Hover}, we show how the total data service changes as the maximum hover time of the UAVs increases. As expected, by increasing the hover time of each UAV, the users can be served for a longer time and, hence, the total data service increases. Fig.\,\ref{Load_Hover} also compares the performance of deploying 5 UAVs versus 10 UAVs. Interestingly, we can see that the 5-UAVs case with 40 minuets hover time (for each UAV) outperforms the 10-UAVs case with 30 minuets hover time. As a result, in this case, deploying UAVs that has a 33\% higher hover time is preferred than doubling the number of UAVs. In fact, increasing the number of UAVs  results in a higher interference which reduces the maximum data service gain that can be typically achieved by using more UAVs. Therefore, depending on system parameters, using more capable UAVs (i.e. with longer flight time) to service ground users can be more beneficial than deploying more UAVs with shorter flight times.\vspace{-0.2cm} 

\begin{figure}[!t]
	\begin{center}
		\vspace{-0.1cm}
		\includegraphics[width=8.3cm]{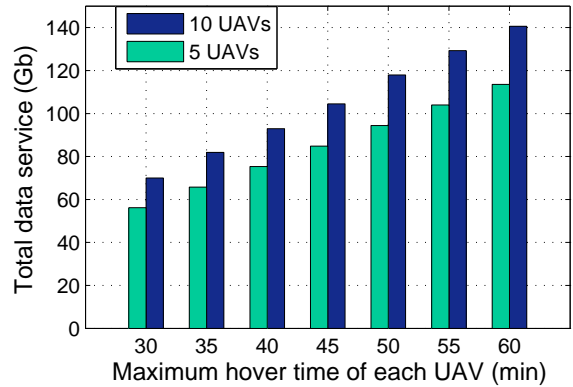}
		\vspace{-0.3cm}
		\caption{ Total data service versus the maximum hover time of each UAV.\vspace{-.32cm}}
		\label{Load_Hover}
	\end{center}\vspace{-.1cm}
\end{figure}

\subsection{Results for Scenario 2}

Here, we present the results for Scenario 2 in which the users are completely serviced using a minimum hover time. In this case, we consider a 10\,Mb data service requirement for each user.

\begin{figure}[!t]
	\begin{center}
		\vspace{-0.1cm}
		\includegraphics[width=8.3cm]{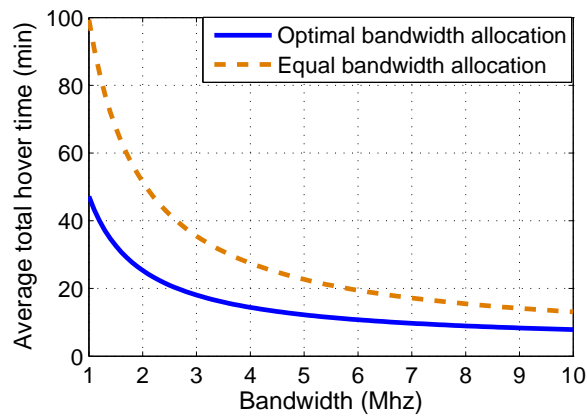}
		\vspace{-0.3cm}
		\caption{ Average hover time versus bandwidth.\vspace{-.32cm}}
		\label{Dwell_bw}
	\end{center}\vspace{-.1cm}
\end{figure} 

In Fig.\,\ref{Dwell_bw}, we show the total hover time versus the transmission bandwidth. Two bandwidth allocation schemes are considered, the optimal bandwidth allocation resulting from Proposition 2, and an equal bandwidth allocation. Clearly, by increasing the bandwidth, the total hover time required for serving the users decreases. In fact, a higher bandwidth can provide a higher the transmission rate and, hence, users can be serviced within a shorter time duration. From Fig.\,\ref{Dwell_bw}, we can see that, the optimal bandwidth allocation scheme can yield a 51\% hover time reduction compared to the equal bandwidth allocation. This is due the fact that, according to Proposition 2, by optimally assigning the bandwidth to each user based on its demand and location, the total hover time of UAVs can be minimized.

Fig.\,\ref{Hover_UAVs} shows the average total hover time of the UAVs as the number of UAVs varies. This result corresponds to the interference-free scenario in which the UAVs operate on different frequency bands. Hence, the total bandwidth usage linearly increases by increasing the number of UAVs. From Fig.\,\ref{Hover_UAVs}, we can see that the total hover time decreases as the number of UAVs increases. A higher number of UAVs corresponds to a higher number of cell partitions. Hence, the size of each cell partition decreases and the users will have a shorter distance to the UAVs. In addition, lower control time is required during serving a smaller and less congested cell. In fact, increasing the number of UAVs leads to a higher transmission rate, and lower control time thus leading to a lower hover time. For instance, as shown in Fig.\,\ref{Hover_UAVs}, when the number of UAVs increases from 2 ot 6, the total hover time decreases by 53\%. Nevertheless, deploying more UAVs in interference-free scenario results in a higher bandwidth usage. Therefore, there is a fundamental tradeoff between the hover time of UAVs and the bandwidth efficiency.  

\begin{figure}[!t]
	\begin{center}
		\vspace{-0.1cm}
		\includegraphics[width=8.3cm]{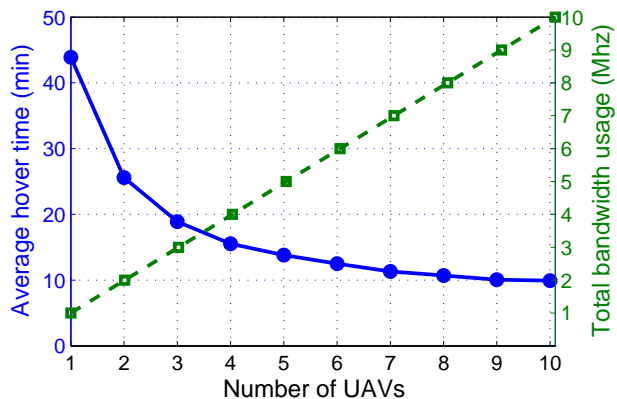}
		\vspace{-0.2cm}
		\caption{ Average hover time versus number of UAVs and bandwidth usage.\vspace{-.32cm}}
		\label{Hover_UAVs}
	\end{center}\vspace{-.1cm}
\end{figure}

Fig.\,\ref{Dwell_alpha} shows the impact of control time on the total hover time for the proposed cell partitioning, as a result of Theorem 3 and the weighted Voronoi diagram. In both cases, we use the optimal bandwidth allocation scheme. Clearly, as the control time factor, $\alpha$, increases, the total hover time also increases. From Fig.\,\ref{Dwell_alpha}, we can see that, using our proposed optimal cell partitioning approach, the average total hover time can be reduced by around 20\% compared to weighted Voronoi case. This is due to the fact that, unlike the weighted Voronoi, our approach also minimizes the control time while generating the cell partitions. We note that, the hover time difference between these two cases increases as $\alpha$ increases. In particular, as shown in Fig.\,\ref{Dwell_alpha}, our approach yields around 32\% hover time reduction when $\alpha=0.5$. 

\begin{figure}[!t]
	\begin{center}
		\vspace{-0.1cm}
		\includegraphics[width=8.3cm]{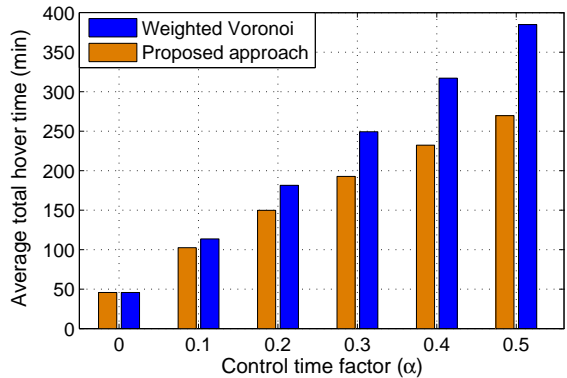}
		\vspace{-0.5cm}
		\caption{ Average hover time versus control time factor ($\alpha$) for $\sigma_o=200$\,m.\vspace{-.6cm}}
		\label{Dwell_alpha}
	\end{center}
\end{figure}

In Fig.\,\ref{HoverIntFactor}, we show the impact of interference on the hover time of UAVs. Clearly, the total hover time increases as the interference between the UAVs increases. This is due to the fact that a lower SINR leads to a lower transmission rate and, hence, a given UAV needs to hover for a longer time in order to completely service its users. For instance, the average hover time in the full interference case ($\beta=1$) is 4.5 times larger than the interference-free case in which $\beta=0$. Therefore, one can significantly reduce the hove time of UAVs by adopting interference mitigation techniques such as using orthogonal frequencies and scheduling of UAVs.

\begin{figure}[!t]
	\begin{center}
		\vspace{-0.1cm}
		\includegraphics[width=8.3cm]{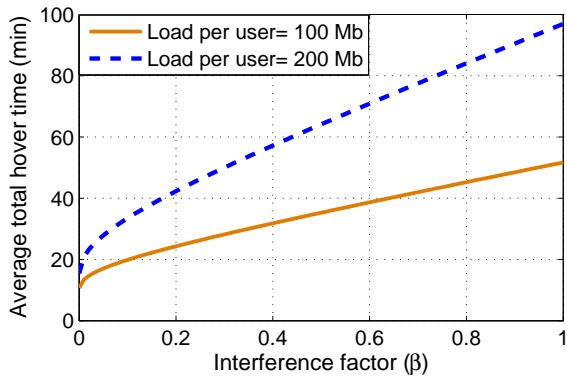}
		\vspace{-0.4cm}
		\caption{ Average hover time versus interference factor.\vspace{-.5cm}}
		\label{HoverIntFactor}
	\end{center}\vspace{.4cm}
\end{figure}


\section{conclusions}\vspace{-0.10cm}
In this paper, we have proposed a novel framework for optimizing UAV-enabled wireless networks while taking into account the flight time constraints of UAVs. In particular, we have investigated two UAV-based communication scenarios. First, given the maximum possible hover times of UAVs, we have maximized the average data service to the ground users under a fair resource allocation policy. To this end, using tools from optimal transport theory, we have determined the optimal cell partitions associated with the UAVs. In the second scenario, given the load requirements of users, we have minimized the average hover time of UAVs needed to completely serve the users. In this case, we have derived the optimal cell partitions as well as the optimal bandwidth allocation to the users that lead to the minimum hover time. The results have shown that, using our proposed cell partitioning approach, the users receive higher fair data service compared to the classical Voronoi case. Moreover, our results for the second scenario have revealed that the average hover time of UAVs can be significantly reduced by using our proposed approach. \vspace{-0.5cm}

\appendix \vspace{-0.2cm}
\subsection{Proof of Theorem 3}\vspace{-0.1cm}
	As shown in Proposition \ref{exist}, there exist optimal cell partitions $\mathcal{A}_i$, $i\in \mathcal{M}$ which are solutions to the optimization problem in (\ref{hover2}). Now, we consider two optimal partitions $\mathcal{A}_l$ and $\mathcal{A}_m$, and a point $\boldsymbol{v}_o=(x_o,y_o)\in \mathcal{A}_m$. Also, let $B_\epsilon(\boldsymbol{v}_o)$ be the intersection of $\mathcal{A}_m$ with a disk that has a center $v_o$ and radius $\epsilon >0$. To characterize the optimal solution of (\ref{hover2}), we first generate new cell partitions $\widehat {\mathcal{A}}_i$ (a variation of optimal partitions) as follows:\vspace{-0.15cm}
	\begin{equation}
	\left\{ \begin{array}{l}
	{\widehat {\mathcal{A}}_m} = \mathcal{A}_m\backslash {B_\varepsilon }({\boldsymbol{v}_o}),\\
	{\widehat {\mathcal{A}}_l} = \mathcal{A}_l \cup {B_\varepsilon }({\boldsymbol{v}_o}),\\
	{\widehat {\mathcal{A}}_i} = \mathcal{A}_i,\,\,\,\,i \ne l,m.
	\end{array} \right.
	\end{equation}
	
	Also, let ${a_\varepsilon } = \int_{{B_\varepsilon }({\boldsymbol{v}_o})} {f(x,y)\textrm{d}x\textrm{d}y} $, and $\widehat a_i= \int_{{\widehat{ \mathcal{A}}_i}} {f(x,y)\textrm{d}x\textrm{d}y}$. Considering the optimality of $\mathcal{A}_i$, $i\in \mathcal{M}$, we have:
		\begin{align}
		&\sum\limits_{i \in \mathcal{M}} \int_{{\mathcal{A}_i}} { {\left[\frac{{Nu(x,y)}}{{{C_i^{B_i}}(x,y)}}+q_i(x,y)\right]f(x,y)}\textrm{d}x\textrm{d}y} + g_i(a_i) \nonumber \\
		& {\mathop  \le \limits^{(a)} }\sum\limits_{i \in \mathcal{M}} \int_{{\widehat {\mathcal{A}}_i}} { {\left[\frac{{N u(x,y)}}{{{C_i^{B_i}}(x,y)}}+q_i(x,y)\right]f(x,y)}\textrm{d}x\textrm{d}y} + g_i(\widehat a_i), \nonumber\\
		&\int_{{\mathcal{A}_l}} { {\left[\frac{{Nu(x,y)}}{{{C_l^{B_l}}(x,y)}}+q_l(x,y)\right]f(x,y)}\textrm{d}x\textrm{d}y} + g_l(a_l)  + \int_{{\mathcal{A}_m}} { {\left[\frac{{Nu(x,y)}}{{{C_m^{B_m}}(x,y)}}+q_m(x,y)\right]f(x,y)}\textrm{d}x\textrm{d}y} + g_m(a_m)\nonumber\\
	&\le \int_{{\mathcal{A}_l}\cup {B_\varepsilon }({v_o})} { {\left[\frac{{Nu(x,y)}}{{{C_l^{B_l}}(x,y)}}+q_l(x,y)\right]f(x,y)}\textrm{d}x\textrm{d}y} + g_l(a_l+a_\epsilon) \nonumber \\&
	 + \int_{{\mathcal{A}_m \backslash {B_\varepsilon }({v_o})}} { {\left[\frac{{Nu(x,y)}}{{{C_m^{B_m}}(x,y)}}+q_m(x,y)\right]f(x,y)}\textrm{d}x\textrm{d}y} + g_m(a_m-a_\epsilon) \nonumber\\ 
	&\int_{{{B_\varepsilon }({v_o})}} { {\left[\frac{{Nu(x,y)}}{{{C_m^{B_m}}(x,y)}}+q_m(x,y)\right]f(x,y)}\textrm{d}x\textrm{d}y}+g_m(a_m)-g_m(a_m-a_\epsilon)\nonumber\\
	&\le \int_{{{B_\varepsilon }({v_o})}} { {\left[\frac{{Nu(x,y)}}{{{C_l^{B_l}}(x,y)}}+q_l(x,y)\right]f(x,y)}\textrm{d}x\textrm{d}y}+g_l(a_l+a_\epsilon)-g_l(a_l), \label{ineqa} 
		\end {align}
		where $(a)$ comes from the fact that $\mathcal{A}_i$ is optimal and, hence, any variation of that (${\widehat{\mathcal{A}}_i}$) cannot lead to a better solution.
		
		Now, we multiply both sides of the inequality in (\ref{ineqa}) by $\frac{1}{a_{\varepsilon}}$. Then, we take the limit when $\varepsilon \to 0$, and we use the following equality:
		\begin{align}
		&\mathop {\lim }\limits_{\varepsilon  \to 0}\frac{1}{a_\varepsilon} \int_{{{B_\varepsilon }({v_o})}}  {\left[\frac{{Nu(x,y)}}{{{C_l^{B_l}}(x,y)}}+q_l(x,y)\right]f(x,y)}\textrm{d}x\textrm{d}y=\mathop {\lim }\limits_{\varepsilon  \to 0}\frac{  \int_{{{B_\varepsilon }({v_o})}}{\left[\frac{{Nu(x,y)}}{{{C_l^{B_l}}(x,y)}}+q_l(x,y)\right]f(x,y)}  \textrm{d}x\textrm{d}y}{\int_{{{B_\varepsilon }({v_o})}}  f(x,y)\textrm{d}x\textrm{d}y}\nonumber\\
		 &=\frac{{Nu(x,y)}}{{{C_l^{B_l}}(x_o,y_o)}}+q_l(x_o,y_o).
		\end{align}
		 
		 Subsequently, following from (\ref{ineqa}), we have:
		 \begin{equation}\label{cond}
		  \frac{{Nu(x,y)}}{{{C_m^{B_m}}(x_o,y_o)}}+q_m(x_o,y_o)+g'_l(a_m)\le\frac{{Nu(x,y)}}{{{C_l^{B_l}}(x_o,y_o)}}+q_l(x_o,y_o)+g'_l(a_l).  
		 \end{equation}
Note that, (\ref{cond})	provides the condition under which a point $(x_o,y_o)$ is assigned to partition $m$ rather than $l$. Therefore, the optimal cell partitions can be characterized as:
	\begin{equation} \label{cellPar}
\hspace{-0.07cm} {\mathcal{A}_i^*} \hspace{-0.08cm} =\hspace{-0.08cm}  \left\{\hspace{-0.08cm}{(x,y)|\, \frac{Nu(x,y)}{C_i^{B_i}(x,y)}+q_i(x,y)+g'_i(a_i) \le \frac{Nu(x,y)}{C_j^{B_j}(x,y)}+q_j(x,y)+g'_j(a_j),\forall j \ne i \in \mathcal{M}} \right\}\hspace{-0.09cm}. \hspace{-0.5cm}
	\end{equation}
Finally, using (\ref{tau}), the optimal average hover time of UAV $i$ is:
\begin{equation}
{\tau_i^*} = \int_{{\mathcal{A}_i^*}} {\frac{{Nu(x,y)}}{{{C_i^{B_i}}(x,y)}}f(x,y)} \textrm{d}x\textrm{d}y + {g_i}\left( {\int_{{\mathcal{A}_i^*}} {f(x,y)} \textrm{d}x\textrm{d}y} \right),
\end{equation}
	
which proves the theorem.	\vspace{-0.2cm} 
\def\baselinestretch{1.08}
\bibliographystyle{IEEEtran}

\bibliography{references}

\begin{thebibliography}{10}
\providecommand{\url}[1]{#1}
\csname url@samestyle\endcsname
\providecommand{\newblock}{\relax}
\providecommand{\bibinfo}[2]{#2}
\providecommand{\BIBentrySTDinterwordspacing}{\spaceskip=0pt\relax}
\providecommand{\BIBentryALTinterwordstretchfactor}{4}
\providecommand{\BIBentryALTinterwordspacing}{\spaceskip=\fontdimen2\font plus
\BIBentryALTinterwordstretchfactor\fontdimen3\font minus
  \fontdimen4\font\relax}
\providecommand{\BIBforeignlanguage}[2]{{%
\expandafter\ifx\csname l@#1\endcsname\relax
\typeout{** WARNING: IEEEtran.bst: No hyphenation pattern has been}%
\typeout{** loaded for the language `#1'. Using the pattern for}%
\typeout{** the default language instead.}%
\else
\language=\csname l@#1\endcsname
\fi
#2}}
\providecommand{\BIBdecl}{\relax}
\BIBdecl

\bibitem{orfanus}
D.~Orfanus, E.~P. de~Freitas, and F.~Eliassen, ``Self-organization as a
  supporting paradigm for military {UAV} relay networks,'' \emph{IEEE
  Communications Letters}, vol.~20, no.~4, pp. 804--807, 2016.

\bibitem{mozaffari2}
M.~Mozaffari, W.~Saad, M.~Bennis, and M.~Debbah, ``Unmanned aerial vehicle with
  underlaid device-to-device communications: Performance and tradeoffs,''
  \emph{IEEE Transactions on Wireless Communications}, vol.~15, no.~6, pp.
  3949--3963, June 2016.

\bibitem{Ismail}
A.~Merwaday and I.~Guvenc, ``{UAV} assisted heterogeneous networks for public
  safety communications,'' in \emph{Proc. of IEEE Wireless Communications and
  Networking Conference Workshops (WCNCW)}, March 2015.

\bibitem{Zhang}
Y.~Zeng, R.~Zhang, and T.~J. Lim, ``Wireless communications with unmanned
  aerial vehicles: opportunities and challenges,'' \emph{IEEE Communications
  Magazine}, vol.~54, no.~5, pp. 36--42, May 2016.

\bibitem{IoTJournal}
M.~Mozaffari, W.~Saad, M.~Bennis, and M.~Debbah, ``Mobile unmanned aerial
  vehicles ({UAV}s) for energy-efficient {Internet of Things communications},''
  \emph{available online: arxiv.org/abs/1703.05401}, 2017.

\bibitem{HouraniModeling}
A.~Hourani, S.~Kandeepan, and A.~Jamalipour, ``Modeling air-to-ground path loss
  for low altitude platforms in urban environments,'' in \emph{Proc. of IEEE
  Global Communications Conference (GLOBECOM)}, Austin, TX, USA, Dec. 2014.

\bibitem{Azari}
M.~M. Azari, F.~Rosas, K.~C. Chen, and S.~Pollin, ``Joint sum-rate and power
  gain analysis of an aerial base station,'' in \emph{Proc. of IEEE Global
  Communications Conference (GLOBECOM) Workshops}, Dec. 2016.

\bibitem{Kalantari}
E.~Kalantari, H.~Yanikomeroglu, and A.~Yongacoglu, ``On the number and {3D}
  placement of drone base stations in wireless cellular networks,'' in
  \emph{Proc. of IEEE Vehicular Technology Conference}, Sep. 2016.

\bibitem{bor}
I.~Bor-Yaliniz and H.~Yanikomeroglu, ``The new frontier in ran heterogeneity:
  Multi-tier drone-cells,'' \emph{IEEE Communications Magazine}, vol.~54,
  no.~11, pp. 48--55, 2016.

\bibitem{Letter}
M.~Mozaffari, W.~Saad, M.~Bennis, and M.~Debbah, ``Efficient deployment of
  multiple unmanned aerial vehicles for optimal wireless coverage,'' \emph{IEEE
  Communications Letters}, vol.~20, no.~8, pp. 1647--1650, Aug. 2016.

\bibitem{Sky}
Facebook, ``Connecting the world from the sky,'' Facebook Technical Report,
  2014.

\bibitem{Jeong}
S.~Jeong, O.~Simeone, and J.~Kang, ``Mobile edge computing via a {UAV}-mounted
  cloudlet: Optimal bit allocation and path planning,'' \emph{available online:
  https://arxiv.org/abs/1609.05362.}, 2016.

\bibitem{jiang}
F.~Jiang and A.~L. Swindlehurst, ``Optimization of {UAV} heading for the
  ground-to-air uplink,'' \emph{IEEE Journal on Selected Areas in
  Communications}, vol.~30, no.~5, pp. 993--1005, June 2012.

\bibitem{ZhangEnergy}
Y.~Zeng and R.~Zhang, ``Energy-efficient {UAV} communication with trajectory
  optimization,'' \emph{IEEE Transactions on Wireless Communications}, to
  appear 2017.

\bibitem{Qin}
Q.~Wu, Y.~Zeng, and R.~Zhang, ``Joint trajectory and communication design for
  {UAV}-enabled multiple access,'' \emph{available online:
  https://arxiv.org/abs/1704.01765}, 2017.

\bibitem{Vishnu}
V.~V. Chetlur and H.~S. Dhillon, ``Downlink coverage analysis for a finite {3D}
  wireless network of unmanned aerial vehicles,'' \emph{available online:
  arxiv.org/abs/1701.01212}, 2017.

\bibitem{Vishal}
V.~Sharma, M.~Bennis, and R.~Kumar, ``{UAV}-assisted heterogeneous networks for
  capacity enhancement,'' \emph{IEEE Communications Letters}, vol.~20, no.~6,
  pp. 1207--1210, June 2016.

\bibitem{OTUAV}
M.~Mozaffari, W.~Saad, M.~Bennis, and M.~Debbah, ``Optimal transport theory for
  power-efficient deployment of unmanned aerial vehicles,'' in \emph{Proc. of
  IEEE International Conference on Communications (ICC)}, May 2016.

\bibitem{niu}
S.~Niu, J.~Zhang, F.~Zhang, and H.~Li, ``A method of {UAVs} route optimization
  based on the structure of the highway network,'' \emph{International Journal
  of Distributed Sensor Networks}, 2015.

\bibitem{DroneDel}
K.~Dorling, J.~Heinrichs, G.~G. Messier, and S.~Magierowski, ``Vehicle routing
  problems for drone delivery,'' \emph{IEEE Transactions on Systems, Man, and
  Cybernetics: Systems}, vol.~47, no.~1, pp. 70--85, Jan 2017.

\bibitem{AkramMagazin}
S.~Chandrasekharan, K.~Gomez, A.~Al-Hourani, S.~Kandeepan, T.~Rasheed,
  L.~Goratti, L.~Reynaud, D.~Grace, I.~Bucaille, T.~Wirth, and S.~Allsopp,
  ``Designing and implementing future aerial communication networks,''
  \emph{IEEE Communications Magazine}, vol.~54, no.~5, pp. 26--34, May 2016.

\bibitem{villani}
C.~Villani, \emph{Topics in optimal transportation}.\hskip 1em plus 0.5em minus
  0.4em\relax American Mathematical Soc., 2003, no.~58.

\bibitem{FDMA2}
O.~Lysenko, S.~Valuiskyi, P.~Kirchu, and A.~Romaniuk, ``Optimal control of
  telecommunication aeroplatform in the area of emergency,'' \emph{Information
  and Telecommunication Sciences}, no.~1, 2013.

\bibitem{ITUR}
ITU-R, ``Rec. p.1410-2 propagation data and prediction methods for the design
  of terrestrial broadband millimetric radio access systems,'' \emph{Series,
  Radiowave propagation}, 2003.

\bibitem{VoronoiDiag}
F.~Aurenhammer, ``Voronoi diagrams—a survey of a fundamental geometric data
  structure,'' \emph{ACM Computing Surveys (CSUR)}, vol.~23, no.~3, pp.
  345--405, 1991.

\bibitem{Alonso}
A.~Silva, H.~Tembine, E.~Altman, and M.~Debbah, ``Optimum and equilibrium in
  assignment problems with congestion: Mobile terminals association to base
  stations,'' \emph{IEEE Transactions on Automatic Control}, vol.~58, no.~8,
  pp. 2018--2031, Aug. 2013.

\bibitem{Flippo}
F.~Santambrogio, ``Optimal transport for applied mathematicians,''
  \emph{Birk{\"a}user, NY}, 2015.

\bibitem{ambros}
L.~Ambrosio and N.~Gigli, ``A user’s guide to optimal transport,'' in
  \emph{Modelling and optimisation of flows on networks}.\hskip 1em plus 0.5em
  minus 0.4em\relax Springer, 2013, pp. 1--155.

\bibitem{Crippa}
G.~Crippa, C.~Jimenez, and A.~Pratelli, ``Optimum and equilibrium in a
  transport problem with queue penalization effect,'' \emph{Advances in
  Calculus of Variations}, vol.~2, no.~3, pp. 207--246, 2009.

\bibitem{ghazzai}
H.~Ghazzai, ``Environment aware cellular networks,'' \emph{available online:
  http://repository.kaust.edu.sa/kaust/handle/10754/344436}, Feb. 2015.

\bibitem{Quadcopter}
M.~C. Achtelik, J.~Stumpf, D.~Gurdan, and K.~M. Doth, ``Design of a flexible
  high performance quadcopter platform breaking the {MAV} endurance record with
  laser power beaming,'' in \emph{Proc. of IEEE International Conference on
  Intelligent Robots and Systems}, Sep. 2011.

\bibitem{jain}
R.~Jain, D.-M. Chiu, and W.~R. Hawe, \emph{A quantitative measure of fairness
  and discrimination for resource allocation in shared computer system}.\hskip
  1em plus 0.5em minus 0.4em\relax tech. rep., Digital Equipment Corporation,
  DEC-TR-301, 1984, vol.~38.

\end{thebibliography}
\end{document}